\def\BibTeX{{\rm B\kern-.05em{\sc i\kern-.025em b}\kern-.08emT\kern-.1667em\lower.7ex\hbox{E}\kern-.125emX}}
\pgfplotsset{compat=1.9}
\DeclareMathAlphabet{\mathcal}{OMS}{cmsy}{m}{n}
\begin{document}
\fancyhead{}
\def\thetitle{Truncated Laplace and Gaussian mechanisms of RDP}
\title{\thetitle}

	\author{Jie Fu, Zhiyu Sun, Haitao Liu, Zhili Chen}
	\affiliation{%
		\institution{Shanghai Key Laboratory of Trustworthy Computing, East China Noraml University, China}
	}
	\email{jie.fu@stu.ecnu.edu.cn, 51265902067@stu.ecnu.edu.cn, 51255902133@stu.ecnu.edu.cn, zhlchen@sei.ecnu.edu.cn}
\date{}

\begin{abstract}
The Laplace mechanism and the Gaussian mechanism are primary mechanisms in differential privacy, widely applicable to many scenarios involving numerical data. However, due to the infinite-range random variables they generate, the Laplace and Gaussian mechanisms may return values that are semantically impossible, such as negative numbers. To address this issue, we have designed the truncated Laplace mechanism and Gaussian mechanism. For a given truncation interval $[a, b]$, the truncated Gaussian mechanism ensures the same Renyi Differential Privacy (RDP) as the untruncated mechanism, regardless of the values chosen for the truncation interval $[a, b]$. Similarly, the truncated Laplace mechanism, for specified interval $[a, b]$, maintains the same RDP as the untruncated mechanism. We provide the RDP expressions for each of them. We believe that our study can further enhance the utility of differential privacy in specific applications. 

\end{abstract}

\maketitle

\section{Introduction}
\label{sec:intro}
Data privacy is a critical concern for data owners during the processes of data collection, storage, and publication, including the dissemination of user data statistics. In recent years, differential privacy, as introduced by~\cite{dwork2006calibrating}, has emerged as a popular privacy framework, thanks to its robust mathematical privacy guarantees. It achieves robust privacy assurances by ensuring that it is nearly impossible to discern whether a specific individual's data is part of the dataset or not based on the disclosed information. In recent years, researchers have been investigating more compact privacy lower bounds and better mechanisms to achieve a balance between utility and privacy.


The classical form of differential privacy is referred to as $\epsilon$-dp. It imposes an upper bound $\epsilon$ on the multiplicative distance between the probability distributions of randomized query outputs for any two neighboring datasets. To achieve $\epsilon$-dp, the standard approach involves adding Laplacian noise to the query outputs. Introduced by Dwork et al.~\cite{dwork2014algorithmic}, approximate differential privacy is denoted as $(\epsilon,\delta)$-dp and the factor $\delta>0$ represents the failure probability that the privacy guarantee does not hold. The standard mechanism for preserving $(\epsilon,\delta)$-dp is the Gaussian mechanism, which adds Gaussian noise to the query outputs. Subsequently, Mironov et al.~\cite{IlyaMironov2017RnyiDP}. introduced Rényi differential privacy (RDP) by observing the Rényi divergence between the distributions of two outputs. Rényi differential privacy and $(\epsilon,\delta)$-dp can be interconverted, and is the most prevalent and widely adopted in both academic literature and practical applications due to it's more compact privacy bounds.

The Laplace mechanism and Gaussian mechanism are commonly used for numerical data, but their applications face two key challenges. Firstly, their outputs often lack consistency. For instance, consider adding Laplace or Gaussian noise to count queries; negative results hold no meaningful interpretation, yet they constitute valid outputs of these mechanisms. Secondly, they have the potential to produce noisy values that deviate significantly from the true values, leading to substantial statistical variance. 

To address the aforementioned two challenges, we propose a truncated differential privacy output mechanism. We selectively release the outputs of Laplace or Gaussian noise addition. More precisely, we assess the noisy output, and if it falls within the interval $[a, b]$, we release that noisy value. Otherwise, we regenerate noise until we obtain a value that falls within the desired interval for release. This mechanism allows for control over output consistency and reduces the statistical variance introduced by noise addition within the appropriate interval $[a, b]$.

We note that truncated Laplace mechanism have been studied~\cite{Geng_Ding_Guo_Kumar_2018,Holohan_Antonatos_Braghin_Aonghusa_2018}, They both argue that the algorithm after truncation of the output does not satisfy the same $\epsilon$-dp as before truncation. However, our study shows that the truncated Laplace mechanism satisfies the same $(\epsilon,\delta)$-DP under specific intervals as before truncation under the RDP. Specifically, our contributions are as follows:
\begin{itemize}
    \item We propose truncated Laplace mechanism and Gaussian mechanism that limit the output after noise addition in the interval $[a,b]$.
    \item We give the RDP loss of the truncated Gaussian mechanism and prove theoretically that the truncated Gaussian mechanism satisfies the same RDP as before truncation, regardless of the value of the truncation interval $[a,b]$.
    \item We give the RDP loss of the truncated Laplace mechanism and prove theoretically that the truncated Laplace mechanism satisfies the same RDP as before truncation, for some particular truncation intervals $[a,b]$.
\end{itemize}    
\section{Preliminary Knowledge} \label{sec:prelim}

\subsection{Differential Privacy} \label{sec:prelim-dp}
Differential privacy is a rigorous mathematical framework that formally defines data privacy. It requires that a single entry in the input dataset must not lead to statistically significant changes in the output \cite{CynthiaDwork2006CalibratingNT,CynthiaDwork2011AFF,dwork2014algorithmic} if differential privacy holds.

\begin{definition}
({\bf Differential Privacy~\cite{dwork2014algorithmic}}). The randomized mechanism $A$ provides ($\epsilon$,  $\delta$)-Differential Privacy (DP), if for any two neighboring datasets $D$ and $D'$ that differ in only a single entry, $\forall$S $\subseteq$ Range($A$),
\begin{equation}
{\rm Pr}(A(D) \in S) < e^{\epsilon} \times {\rm Pr}(A(D') \in S) + \delta.
\end{equation}
\end{definition}

Here, $\epsilon > 0$ controls the level of privacy guarantee in the worst case. The smaller $\epsilon$, the stronger the privacy level is. The factor $\delta > 0$ is the failure probability that the property does not hold. In practice, the value of $\delta$ should be negligible~\cite{zhu2017differential,papernot2018scalable}, particularly less than $\frac{1}{|D|}$.


By adding random noise, we can achieve differential privacy for a function $f: \mathcal{X}^n \rightarrow \mathbb{R}^d$ according to Definition 2.1. The $l_k$-sensitivity determines how much noise is needed and is defined as follow.

\begin{definition} ({\bf $\mathbf{l_k}$-Sensitivity\cite{dwork2006calibrating}}) For a function $f: \mathcal{X}^n \rightarrow \mathbb{R}^d$, we define its $l_k$ norm sensitivity (denoted as $\Delta_k f$) over all neighboring datasets $x, x^{'} \in \mathcal{X}^n$ differing in a single sample as
\begin{align}
    \text{sup}_{x, x^{'} \in \mathcal{X}^n} ||f(x) - f(x^{'})||_k \leq \Delta_k f.
\end{align}
\end{definition}

In this paper, we focus on $l_2$ sensitivity, i.e., $|| \cdot ||_2$. Additionally, the following Lemma~\ref{lem:prelim-post} ensures the privacy guarantee of post-processing operations.

\begin{lemma}[{\bf Post-processing~\cite{dwork2014algorithmic}}]\label{lem:prelim-post}
Let $\mathcal{M}$ be a mechanism satisfying $(\epsilon, \delta)$-DP. Let $f$ be a function whose input is the output of $\mathcal{M}$. Then $f(\mathcal{M})$ also satisfies $(\epsilon,\delta)$-DP.
\end{lemma}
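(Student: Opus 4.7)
The plan is to unfold the probability $\Pr[f(\mathcal{M}(D)) \in S]$ into a statement about $\mathcal{M}$ alone, then invoke the $(\epsilon,\delta)$-DP guarantee on the appropriate preimage. First I would handle the case where $f$ is deterministic: for any measurable $S \subseteq \mathrm{Range}(f)$, set $T = f^{-1}(S) \subseteq \mathrm{Range}(\mathcal{M})$, which is measurable since $f$ is taken to be measurable. Then
\begin{equation}
\Pr[f(\mathcal{M}(D)) \in S] \;=\; \Pr[\mathcal{M}(D) \in T],
\end{equation}
and applying Definition 2.1 for $\mathcal{M}$ to the set $T$ yields
\begin{equation}
\Pr[\mathcal{M}(D) \in T] \;\leq\; e^{\epsilon}\, \Pr[\mathcal{M}(D') \in T] + \delta \;=\; e^{\epsilon}\, \Pr[f(\mathcal{M}(D')) \in S] + \delta,
\end{equation}
which is exactly the $(\epsilon,\delta)$-DP inequality for the composed mechanism $f \circ \mathcal{M}$.

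Next I would extend the argument to randomized $f$ by conditioning on the internal randomness. Write $f(y) = g(y; R)$ where $R$ is a source of randomness independent of the coins of $\mathcal{M}$ and $g$ is deterministic in both arguments. Fix a realization $R = r$; the deterministic case above gives
\begin{equation}
\Pr[g(\mathcal{M}(D); r) \in S] \;\leq\; e^{\epsilon}\, \Pr[g(\mathcal{M}(D'); r) \in S] + \delta.
\end{equation}
Taking expectation over $r$ on both sides preserves the inequality by linearity, and independence of $R$ from the coins of $\mathcal{M}$ identifies each side with the unconditional probability $\Pr[f(\mathcal{M}(D)) \in S]$ and $\Pr[f(\mathcal{M}(D')) \in S]$, respectively.

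The only real subtlety — which I would flag but not belabor — is measurability: one must assume $f$ is a measurable map (or a measurable Markov kernel in the randomized case) so that $f^{-1}(S)$ is a legitimate event to plug into the DP guarantee of $\mathcal{M}$. Given that, no property of $\mathcal{M}$ beyond $(\epsilon,\delta)$-DP itself is used, so the conclusion holds uniformly for every such $f$. This is exactly why the post-processing lemma is the workhorse it is, and why the upcoming truncation arguments can freely apply any deterministic rejection/clipping map to a privatized output without degrading its privacy bound.
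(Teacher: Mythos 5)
Your argument is correct: the preimage step for deterministic $f$ followed by conditioning on (and averaging over) the independent randomness of $f$ is exactly the standard proof of this lemma, which the paper itself does not reprove but simply cites from Dwork and Roth. The only difference from the cited reference's presentation is cosmetic --- it treats a randomized $f$ as a convex combination of deterministic functions, which is the same averaging you perform over $R=r$ --- so there is nothing to add beyond your (appropriately flagged) measurability caveat.
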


\subsection{Rényi Differential Privacy}
Rényi differential privacy (RDP) is a relaxation of $\epsilon$-differential privacy, which is defined on Rényi divergence as follows.

\begin{definition}({\bf Rényi Divergence \cite{van2014renyi}}) Given two probability distributions $P$ and $Q$, the Rényi divergence of order $\alpha > 1$ is: 
\begin{align}
D_{\alpha}(P \| Q)=\frac{1}{\alpha-1} \ln \mathbf{E}_{x \sim Q}\left[\left(\frac{P(x)}{Q(x)}\right)^{\alpha}\right],
\end{align}
where $\mathbf{E}_{x \sim Q}$ denotes the excepted value of $x$ for the distribution $Q$, $P(x)$, and $Q(x)$ denotes the density of $P$ or $Q$ at $x$ respectively.
\end{definition}


\begin{definition}({\bf Rényi Differential Privacy (RDP) \cite{IlyaMironov2017RnyiDP}}) For any neighboring datasets $x, x^\prime \in \mathcal{X}^n$, a randomized mechanism $\mathcal{M}: \mathcal{X}^n \rightarrow \mathbb{R}^{d}$ satisfies $(\alpha, R)$-RDP if
\begin{align}
    D_{\alpha}(\mathcal{M}(x) || \mathcal{M}(x^\prime)) \leq R.
\end{align}
\end{definition}

The following Lemma~\ref{lem:conversion} defines the standard form for converting $(\alpha, R)$-RDP to ($\epsilon$,  $\delta$)-DP.

\begin{lemma}\label{lem:conversion}
({\bf Conversion from RDP to DP~\cite{balle2020hypothesis}}). if a randomized mechanism $f : D \rightarrow \mathbb{R}$  satisfies $(\alpha,R)$-RDP ,then it satisfies$(R+\ln ((\alpha-1) / \alpha)-(\ln \delta+ \ln \alpha) /(\alpha-1), \delta)$-DP for any $0<\delta<1$.
\end{lemma}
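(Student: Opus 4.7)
The plan is to translate the Rényi divergence bound into an $(\epsilon,\delta)$-style guarantee by working with the hockey-stick (or ``$e^\epsilon$-divergence'') formulation of approximate DP. Let $P$ and $Q$ denote the densities of $f(D)$ and $f(D')$. Writing $H_{e^\epsilon}(P\|Q) := \mathbb{E}_{y\sim Q}\bigl[(P(y)/Q(y)-e^\epsilon)_{+}\bigr]$, it is standard (and follows directly from the definition) that $(\epsilon,\delta)$-DP is equivalent to $H_{e^\epsilon}(P\|Q)\le \delta$ for both orderings of the pair. The $(\alpha,R)$-RDP hypothesis rewrites as the moment bound $\mathbb{E}_{y\sim Q}\bigl[(P(y)/Q(y))^{\alpha}\bigr]\le e^{(\alpha-1)R}$, so the problem reduces to bounding a hockey-stick integral using a single moment bound on the privacy-loss ratio.

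First I would restrict attention to the ``bad set'' $A=\{y:P(y)>e^\epsilon Q(y)\}$, because the supremum defining $H_{e^\epsilon}$ is attained there; this lets me replace $(\cdot)_{+}$ by an ordinary integral on $A$. Next I would control $P(A)$ and $Q(A)$ separately in terms of the Rényi moment. The cleanest way is a Chernoff-style estimate: since $P/Q>e^\epsilon$ on $A$, I can insert a factor $(P/Q)^{\alpha-1}/e^{(\alpha-1)\epsilon}\ge 1$ inside $\int_A P$ to obtain $P(A)\le e^{-(\alpha-1)\epsilon}\,\mathbb{E}_Q[(P/Q)^{\alpha}]\le e^{(\alpha-1)(R-\epsilon)}$, and similarly $Q(A)\le e^{-\alpha\epsilon}\,e^{(\alpha-1)R}$. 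These plug into $H_{e^\epsilon}(P\|Q)=P(A)-e^\epsilon Q(A)$.

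The main technical step, and the one I expect to be the hardest, is extracting the tighter constant $\ln((\alpha-1)/\alpha)-\ln\alpha/(\alpha-1)$ rather than the classical looser $-\ln\delta/(\alpha-1)$ Mironov bound. The naive combination of the two tail estimates above loses a factor that has to be recovered by optimizing the split. I would apply Hölder's inequality with conjugate exponents $\alpha$ and $\alpha/(\alpha-1)$ to the decomposition $\mathbb{E}_Q[(P/Q-e^\epsilon)\mathbf 1_A]$, which produces a product of (a power of) the Rényi moment and (a power of) $Q(A)$; then the $\alpha$-moment feeds back into the $Q(A)$ tail via the inequality above, giving a bound of the shape $C(\alpha)\,e^{(\alpha-1)(R-\epsilon)}$ with $C(\alpha)=((\alpha-1)/\alpha)^{\alpha-1}/\alpha$ after optimization.

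Finally I would solve $C(\alpha)\,e^{(\alpha-1)(R-\epsilon)}\le\delta$ for $\epsilon$. Taking logarithms and rearranging gives
\begin{equation*}
\epsilon \;\ge\; R+\frac{\ln C(\alpha)-\ln\delta}{\alpha-1} \;=\; R+\ln\!\frac{\alpha-1}{\alpha}-\frac{\ln\delta+\ln\alpha}{\alpha-1},
\end{equation*}
which is exactly the statement of the lemma. The same argument applied with $P$ and $Q$ swapped (using that RDP is symmetric in the neighboring datasets in the definition given) handles the other direction and completes the proof.
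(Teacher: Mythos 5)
The paper does not actually prove this lemma: it is imported verbatim from Balle et al.\ (2020) with a citation and no argument, so there is no in-paper proof to compare against. Judged on its own merits, your proposal is essentially a correct reconstruction of the standard conversion argument, and your final algebra checks out: the bound $H_{e^\epsilon}(P\|Q)\le C(\alpha)\,e^{(\alpha-1)(R-\epsilon)}$ with $C(\alpha)=\frac{1}{\alpha}\bigl(\frac{\alpha-1}{\alpha}\bigr)^{\alpha-1}$ is exactly equivalent, after taking logarithms, to the $\epsilon$ stated in the lemma.

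One step is phrased loosely and should be tightened. After H\"older on the bad set $A=\{P>e^\epsilon Q\}$ you get $H_{e^\epsilon}(P\|Q)\le e^{(\alpha-1)R/\alpha}\,Q(A)^{1-1/\alpha}-e^\epsilon Q(A)$; the improved constant does \emph{not} come from ``feeding the moment bound back into the $Q(A)$ tail'' (substituting an upper bound on $Q(A)$ into the negative term would be invalid, and substituting it only into the positive term reproduces the loose Mironov bound). The correct move, which your phrase ``after optimization'' presumably intends, is to treat $q=Q(A)\in[0,1]$ as free and maximize $g(q)=e^{(\alpha-1)R/\alpha}q^{1-1/\alpha}-e^\epsilon q$; the maximizer $q^\ast=\bigl(\frac{\alpha-1}{\alpha}\bigr)^{\alpha}e^{(\alpha-1)R-\alpha\epsilon}$ yields exactly $g(q^\ast)=C(\alpha)e^{(\alpha-1)(R-\epsilon)}$. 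With this reading, your preliminary Chernoff estimates on $P(A)$ and $Q(A)$ become dispensable (they only reprove the weaker conversion). An equivalent and slightly cleaner packaging of the same step is the pointwise inequality $(z-e^\epsilon)_+\le\frac{(1-1/\alpha)^\alpha}{\alpha-1}e^{-(\alpha-1)\epsilon}z^\alpha$ for all $z\ge 0$, integrated against $Q$ with $z=P/Q$; this is essentially how the cited reference and its successors obtain the bound. Your remark that swapping $P$ and $Q$ handles the reverse ordering is fine, since the RDP hypothesis quantifies over all ordered neighboring pairs, and the whole argument requires $\alpha>1$, consistent with the definition of RDP used in the paper.
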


The following Definition~\ref{definition:Gaussian mechanism of RDP} provides a formal definition of Gaussian mechanism, and a formal RDP guarantee by it.

\begin{definition}({\bf RDP of Gaussian mechanism~\cite{IlyaMironov2017RnyiDP}})\label{definition:Gaussian mechanism of RDP}
	Assuming $f$ is a real-valued function, and the sensitivity of $f$ is $\mu$, the Gaussian mechanism for approximating $f$ is defined as
	\begin{align} \label{equ:Gaussian}
	\mathbf{G}_{\sigma} f(D)=f(D)+N\left(0, \mu^2\sigma^{2}\right),
	\end{align}
	where $N(0, \mu^2\sigma^{2})$ is normally distributed random variable with standard deviation $\mu\sigma$ and mean $0$. Then the Gaussian mechanism with noise $\mathbf{G}_{\sigma}$ satisfies $(\alpha,\alpha / 2\sigma^2)-RDP$.
	
\end{definition}

The following Definition~\ref{definition:Laplace noise of RDP} provides a formal definition of Laplace mechanism, and a formal RDP guarantee by it.

\begin{definition}({\bf RDP of Laplace mechanism~\cite{IlyaMironov2017RnyiDP}}) \label{definition:Laplace noise of RDP} 
$\Lambda(\mu, \lambda)$ is Laplace distribution with mean $\mu$ and scale $\lambda$. Assuming that $f: \mathcal{D} \mapsto \mathbb{R}$ is a function of sensitivity $\mu$. For any $\alpha \ge 1$ and $\lambda > 0$, the Rényi divergence for Laplace distribution as follow.
\begin{align}
\begin{array}{r}
D_{\alpha}(\Lambda(0, \lambda) \| \Lambda(\mu, \lambda))=\frac{1}{\alpha-1} \log \left\{\frac{\alpha}{2 \alpha-1} \exp \left(\frac{\mu\alpha-\mu}{\lambda}\right)\right. \\
\left.+\frac{\alpha-1}{2 \alpha-1} \exp \left(\frac{-\alpha\mu}{\lambda}\right)\right\} .
\end{array}
\end{align}
\end{definition}

\section{Truncated Gaussian mechanism}
In this Section, we introduce the truncated Gaussian mechanism. and perform a privacy analysis of the truncated Gaussian mechanism from the perspective of RDP.

\subsection{Algorithm Description}
\begin{algorithm}
\caption{Gaussian mechanism with selective release}\label{selective publish of Gaussian}
\KwIn{function $f(\cdot)$, dataset $D$, Gaussian distribution $N$, a given interval $[a,b]$}
\KwOut{$A(D)$ that falls in the interval $[a,b]$ after adding Gaussian noise}
$f^{\prime}(D)=min(max(f(D),0),\mu)$ \;
$A(D)=f^{\prime}(D)+N(0,\mu^2\sigma^2) $\;
\While{ $ A(D) < a$ or $A(D) > b$}
{$A(D)=f^{\prime}(D)+N(0,\mu^2\sigma^2) $\;
}
\Return $A(D)$
\end{algorithm}
Now we describe the algorithm of truncated gaussian mechanism in Algorithm~\ref{selective publish of Gaussian}. First the output of the function $f(\cdot)$ needs to be clipped to $[0,\mu]$ obtain the sensitivity $\mu$, followed by adding Gaussian noise with mean 0 and standard deviation $\mu\sigma$ to the clipped values. Then determine whether the noise-added value is in the interval $[a,b]$, and output it if it is, otherwise re-add Gaussian noise with the same coefficients until the noise-added value satisfies the interval before outputting it.

\begin{figure}[hbt]
  \centering
  \begin{subfigure}{0.49\linewidth}
    \centering
    \includegraphics[width=1.0\linewidth]{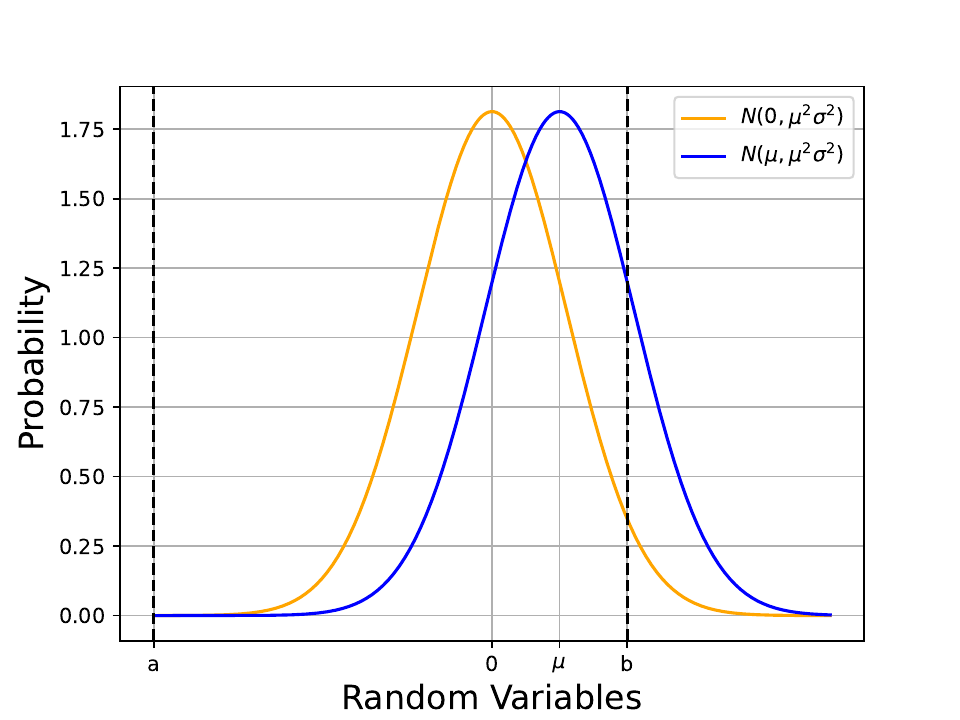}
    \caption{Before truncation.}
    \label{figure: The CDF of Gaussian distribution}
  \end{subfigure}
  \hfill
  \begin{subfigure}{0.49\linewidth}
    \centering
    \includegraphics[width=1.0\linewidth]{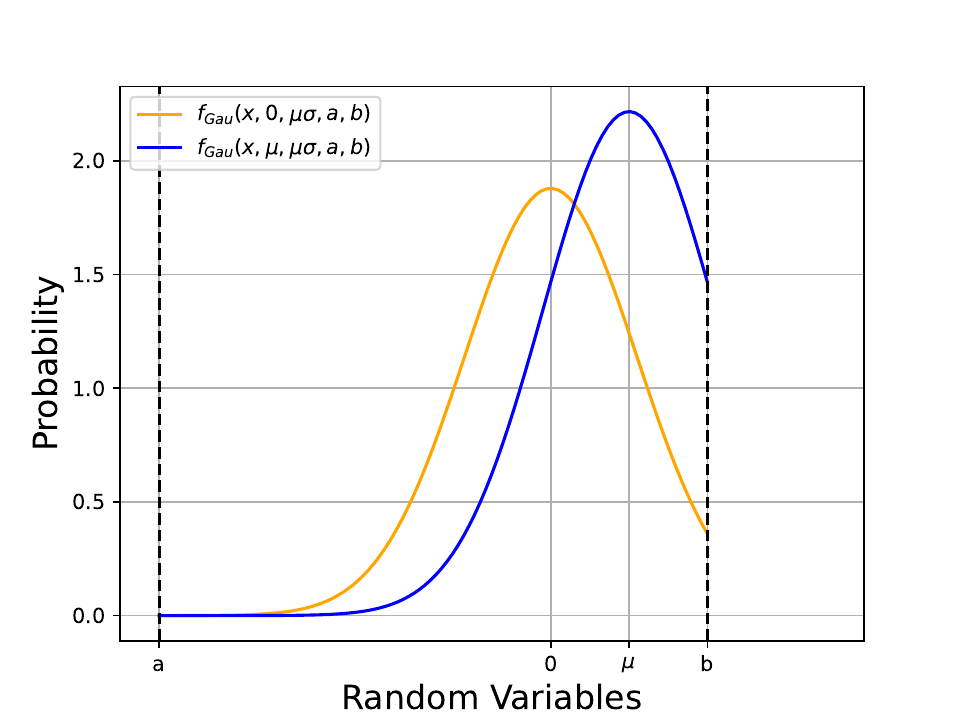}
    \caption{After truncation.}
    \label{figure: Truncated normal distribution}
  \end{subfigure}
  \vspace{5mm}
  \caption{Before and after truncating the normal distribution.}
  \label{figure: Before and after truncating the normal distribution}
\end{figure}

\subsection{Main Result}

\begin{theorem} \label{definition:selective publish of Gaussian}
Algorithm~\ref{selective publish of Gaussian} satisfies $(\alpha,{\alpha}/{2\sigma^2})-RDP$.
\end{theorem}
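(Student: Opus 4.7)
The plan is to identify the output distribution of Algorithm~\ref{selective publish of Gaussian} and then bound its Rényi divergence for neighboring datasets using a log-concavity argument.

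First, I would note that the while-loop is rejection sampling, so its output is distributed as $N(f'(D),\mu^2\sigma^2)$ conditioned on lying in $[a,b]$. The clipping step forces $f'(D)\in[0,\mu]$, so for neighboring $D,D'$ the quantity $\Delta:=f'(D)-f'(D')$ satisfies $|\Delta|\le\mu$. Writing $\phi_D$ for the density of $N(f'(D),\mu^2\sigma^2)$ and $Z_D=\int_a^b \phi_D(y)\,dy$, the output density is $p_D(y)=\phi_D(y)/Z_D$ on $[a,b]$.

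Next, I would compute the Rényi divergence by exploiting the fact that the Gaussian log-likelihood ratio is linear in $y$. After a shift of coordinates so that $f'(D')=0$ and $f'(D)=\Delta$, and letting $W_m := \Pr(N(m,\mu^2\sigma^2)\in[a,b])$, completion of the square in the moment-generating function of the truncated Gaussian gives
\[
\mathbf{E}_{y\sim p_{D'}}\!\left[\left(\frac{p_D(y)}{p_{D'}(y)}\right)^{\!\alpha}\right]
= \frac{W_0^{\alpha-1}\,W_{\alpha\Delta}}{W_\Delta^{\alpha}}\cdot \exp\!\left(\frac{\alpha(\alpha-1)\Delta^2}{2\mu^2\sigma^2}\right).
\]
The exponential factor is exactly the one that appears in the untruncated Gaussian analysis, and since $|\Delta|\le\mu$ it is at most $\exp(\alpha(\alpha-1)/(2\sigma^2))$. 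It therefore suffices to show that the prefactor $W_0^{\alpha-1} W_{\alpha\Delta}/W_\Delta^{\alpha}$ is at most $1$.

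The main obstacle, and the crux of the proof, is this prefactor inequality. I would derive it from the log-concavity of the map $m\mapsto W_m$, which holds because $W_m$ is obtained by integrating the log-concave Gaussian density along a convex slab in $m$, and such integrals are log-concave by the Prékopa--Leindler inequality. Writing $g(m):=\log W_m$, concavity of $g$ at the anchor point $m=\Delta$ yields the tangent-line bounds $g(0)\le g(\Delta)-\Delta\, g'(\Delta)$ and $g(\alpha\Delta)\le g(\Delta)+(\alpha-1)\Delta\, g'(\Delta)$; adding $(\alpha-1)$ times the first to the second collapses the derivative terms and gives $(\alpha-1)g(0)+g(\alpha\Delta)\le \alpha g(\Delta)$, which is precisely $W_0^{\alpha-1}W_{\alpha\Delta}\le W_\Delta^{\alpha}$. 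Combining these pieces yields $D_\alpha(p_D\|p_{D'})\le \alpha\Delta^2/(2\mu^2\sigma^2)\le \alpha/(2\sigma^2)$, as required.
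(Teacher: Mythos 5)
Your proposal is correct, and its skeleton coincides with the paper's: identify the output of the retry loop as a Gaussian conditioned on $[a,b]$, compute the Rényi divergence so that the untruncated term $\exp\bigl(\alpha(\alpha-1)\Delta^2/(2\mu^2\sigma^2)\bigr)$ splits off from a prefactor built from window probabilities $W_m=\Pr\bigl(N(m,\mu^2\sigma^2)\in[a,b]\bigr)$, and then show that prefactor is at most $1$ via concavity of $m\mapsto\log W_m$ (the paper's Theorem~\ref{theorem:Truncated Gaussian mechanism} is exactly this statement, applied through Jensen's inequality at the points $0,\,1,\,\alpha,\,1-\alpha$ in units of $\mu$, which is the chord form of your tangent-line argument). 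Where you genuinely diverge is in how log-concavity of $W_m$ is established: the paper differentiates $h(x)$ twice and forces $h\,h''\le (h')^2$ through an ad hoc bound on the Gaussian integral plus an appeal to Acz\'el's inequality — a computation that requires sign case analysis and is the shakiest part of the paper — whereas you get it for free from preservation of log-concavity under convolution with an indicator of an interval (Pr\'ekopa--Leindler). Your route is cleaner and more robust; it also buys a little extra rigor in that you carry a general shift $\Delta$ with $\lvert\Delta\rvert\le\mu$ through the whole argument, while the paper silently reduces to the two extreme means $0$ and $\mu$. The paper's calculation, on the other hand, is self-contained and yields the explicit closed-form RDP expression for the truncated mechanism, not just the bound.
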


\begin{proof}
Figure~\ref{figure: Before and after truncating the normal distribution} plots two normal distributions, namely $N(0,\mu^2\sigma^2)$ and $N(\mu,\mu^2\sigma^2)$, as the output probability distributions of function $f$ on two neighboring datasets whose sensitivity is $\mu$. By selective update and selective release in the threshold range $[a,b]$, the probability distribution outside of this interval will accumulate within the interval. As such, the final output distribution is truncated and transformed into a truncated normal distribution, as shown in Figure~\ref{figure: Truncated normal distribution}. As $x$ is assumed to follow a normal distribution, the truncated normal distribution with mean $0$ and mean $\mu$ can be represented as follows:

\begin{align*}
f_{Gau}(x ; 0, \mu\sigma, a, b)=\left\{\begin{array}{ll}
\frac{1}{\mu\sigma \sqrt{2 \pi}} e^{-\frac{x^{2}}{2 \mu^2\sigma^{2}}} \cdot \frac{1}{\Phi\left(\frac{b}{\mu\sigma}\right)-\Phi\left(\frac{a}{\mu\sigma}\right)} & a \leq x \leq b, \\
0 & \text { otherwise. }
\end{array}\right.
\end{align*}

\begin{align*}
f_{Gau}(x ; \mu, \mu\sigma, a, b)=\left\{\begin{array}{ll}
\frac{1}{\mu\sigma \sqrt{2 \pi}} e^{-\frac{(x-\mu)^{2}}{2 \mu^2\sigma^{2}}} \cdot \frac{1}{\Phi\left(\frac{b-\mu}{\mu\sigma}\right)-\Phi\left(\frac{a-\mu}{\mu\sigma}\right)} & a \leq x \leq b, \\
0 & \text { otherwise. }
\end{array}\right.
\end{align*}

Here $\mu\sigma$ is the standard deviation of the original normal distribution, whereas $a$ and $b$ are the lower and upper truncation values, respectively. $\Phi(x)$ denotes the cumulative distribution function of the standard normal distribution.

Then  we substitute the two truncated normal distributions into Rényi divergence \cite{van2014renyi} to calculate the RDP as follows:
\begin{align}
	\begin{split}
		\begin{aligned}
			D_{\alpha}( & f_{Gau}(x ; 0, \mu\sigma, a, b )||f_{Gau}(x ; \mu, \mu\sigma, a, b)) \\
			= & \frac{1}{\alpha-1} \cdot \ln \int_{a}^{b} \frac{[f_{Gau}(x ; 0, \mu\sigma, a, b )]^\alpha}{[f_{Gau}(x ; \mu, \mu\sigma, a, b)]^{\alpha-1}} \mathrm{d} x \\
			= & \frac{1}{\alpha-1}\cdot \ln \{\frac{(\Phi(\frac{b-\mu}{\mu\sigma })-\Phi(\frac{a-\mu}{\mu\sigma } ))^{\alpha-1}  }{(\Phi(\frac{b}{\mu\sigma })-\Phi(\frac{a}{\mu\sigma }))^{\alpha}} \cdot \frac{1}{\mu\sigma \sqrt{2 \pi}} \int_{a}^{b} \exp \left[\left(-x^{2}+\right.\right. \\
			& \left.\left.2(1-\alpha) \mu x-(1-\alpha) \mu^{2}\right) /\left(2 \mu^2\sigma^2\right)\right] \mathrm{d} x \} \\
			= & \frac{1}{\alpha-1} \cdot \{ \frac{\alpha(\alpha-1)}{2 \sigma^2} + \ln [\frac{(\Phi(\frac{b-\mu}{\mu\sigma })-\Phi(\frac{a-\mu}{\mu\sigma } ))^{\alpha-1}  }{(\Phi(\frac{b}{\mu\sigma })-\Phi(\frac{a}{\mu\sigma }))^{\alpha}}  \\
			& \cdot  \int_{\frac{a-(1-\alpha)\mu}{\mu \alpha}}^{\frac{b-(1-\alpha)\mu}{\mu \alpha}} \frac{1}{\sqrt{2\pi}} \cdot exp(-\frac{x^2}{2} )\mathrm{d}(x)]\} \\
			= & \frac{1}{\alpha-1} \cdot \{\frac{\alpha(\alpha-1)}{2\sigma ^2} + \ln [ \frac{(\Phi(\frac{b-\mu}{\mu\sigma })-\Phi(\frac{a-\mu}{\mu\sigma } ))^{\alpha-1}  }{(\Phi(\frac{b}{\mu\sigma })-\Phi(\frac{a}{\mu\sigma }))^{\alpha}} \\
			& \cdot ({\Phi(\frac{b-(1-\alpha)\mu}{\mu\sigma})-\Phi(\frac{a-(1-\alpha)\mu}{\mu\sigma})})]\} \\
			= & \frac{\alpha }{2 \sigma^2} + \frac{1}{\alpha-1} \cdot \ln\{\frac{(\Phi(\frac{b-\mu}{\mu\sigma })-\Phi(\frac{a-\mu}{\mu\sigma } ))^{\alpha-1}  }{(\Phi(\frac{b}{\mu\sigma })-\Phi(\frac{a}{\mu\sigma }))^{\alpha}} \cdot [\Phi(\frac{b-(1-\alpha)\mu}{\mu\sigma})\\
			& -\Phi(\frac{a-(1-\alpha)\mu}{\mu\sigma})]\} ,\\
\end{aligned}
\end{split}
\end{align}

where $\Phi(x)=\frac{1}{\sqrt{2 \pi}} \int_{-\infty}^{x} e^{-\frac{t^{2}}{2}} dt \nonumber$.

Similarly, we can get :
\begin{align}
	\begin{split}
		\begin{aligned}
        D_{\alpha}( & f_{Gau}(x ;\mu, \mu\sigma, a, b )||f_{Gau}(x ; 0, \mu\sigma, a, b)) \\
        = & \frac{1}{\alpha-1} \cdot \ln \int_{a}^{b} \frac{[f_{Gau}(x ; \mu, \mu\sigma, a, b )]^\alpha}{[f_{Gau}(x ; 0, \mu\sigma, a, b)]^{\alpha-1}} \mathrm{d} x \\
        = & \frac{1}{\alpha-1}\cdot \ln \{\frac{(\Phi(\frac{b}{\mu\sigma })-\Phi(\frac{a}{\mu\sigma } ))^{\alpha-1}  }{(\Phi(\frac{b-\mu}{\mu\sigma })-\Phi(\frac{a-\mu}{\mu\sigma }))^{\alpha}} \cdot \frac{1}{\mu\sigma \sqrt{2 \pi}} \int_{a}^{b} \exp \left[\left(-x^{2}+\right.\right. \\
        & \left.\left.2 \alpha \mu x- \alpha \mu^{2}\right) /\left(2 \mu^2\sigma^2\right)\right] \mathrm{d} x \}\\
        = & \frac{1}{\alpha-1} \cdot \{\frac{\alpha(\alpha-1)}{2\sigma ^2} + \ln [ \frac{(\Phi(\frac{b}{\mu\sigma })-\Phi(\frac{a}{\mu\sigma } ))^{\alpha-1}  }{(\Phi(\frac{b-\mu}{\mu\sigma })-\Phi(\frac{a-\mu}{\mu\sigma }))^{\alpha}} \\
        & \cdot ({\Phi(\frac{b-\alpha \mu}{\mu\sigma})-\Phi(\frac{a-\alpha \mu}{\mu\sigma})})]\} \\
		\end{aligned}
	\end{split}
\end{align}


According Theorem~\ref{theorem:Truncated Gaussian mechanism}, we have:
\[
	D_{\alpha}(f_{Gau}(x ; 0, \mu\sigma, a, b )||f_{Gau}(x ; \mu, \mu\sigma, a, b)) \leq \alpha /{2 \sigma^{2}}, and
\]
\[
	D_{\alpha}(f_{Gau}(x ; \mu, \mu\sigma, a, b )||f_{Gau}(x ; 0, \mu\sigma, a, b)) \leq \alpha /{2 \sigma^{2}}.
\]

Therefore, Theorem~\ref{definition:selective publish of Gaussian} is proved.
\end{proof}

\label{sec:Truncated Gaussian mechanism}
\begin{theorem} \label{theorem:Truncated Gaussian mechanism}
If \; $A= [(\Phi(\frac{b-\mu}{\mu\sigma })-\Phi(\frac{a-\mu}{\mu\sigma }))^{\alpha-1}] \cdot [\Phi(\frac{b-(1-\alpha)\mu}{\mu\sigma })-\Phi(\frac{a-(1-\alpha)\mu}{\mu\sigma })] / [(\Phi(\frac{b}{\mu\sigma })-\Phi(\frac{a}{\mu\sigma }))^{\alpha}], B= [(\Phi(\frac{b}{\mu\sigma })-\Phi(\frac{a}{\mu\sigma }))^{\alpha-1}] \cdot [\Phi(\frac{b-\alpha\mu}{\mu\sigma })-\Phi(\frac{a-\alpha\mu}{\mu\sigma })]/[({\Phi(\frac{b-\mu}{\mu\sigma })-\Phi(\frac{a-\mu}{\mu\sigma }))^{\alpha}}]$,\; where\; $\Phi(x)=\frac{1}{\sqrt{2 \pi}} \int_{-\infty}^{x} e^{-\frac{t^{2}}{2}} dt$. $A, B \leq 1$,\; when \;$b\ge a, 
\mu>0,\sigma > 0$\; and\; $\alpha \ge 1$.
\end{theorem}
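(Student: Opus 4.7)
I will reduce both bounds $A\le 1$ and $B\le 1$ to a single log-concavity statement about the function
\[
F(t)\;:=\;\Phi\!\left(\tfrac{b-t\mu}{\mu\sigma}\right)-\Phi\!\left(\tfrac{a-t\mu}{\mu\sigma}\right)
\;=\;\Pr\!\bigl[N(t\mu,\mu^{2}\sigma^{2})\in[a,b]\bigr],
\]
so that $A=F(1)^{\alpha-1}F(1-\alpha)/F(0)^{\alpha}$ and $B=F(0)^{\alpha-1}F(\alpha)/F(1)^{\alpha}$. Setting $g(t):=\ln F(t)$, the inequality $A\le 1$ is equivalent to $(\alpha-1)g(1)+g(1-\alpha)\le \alpha g(0)$, which for $\alpha>1$ rearranges to the slope comparison
\[
\frac{g(0)-g(1-\alpha)}{0-(1-\alpha)}\;\ge\;\frac{g(1)-g(0)}{1-0},
\]
i.e.\ the three-point concavity inequality for $g$ at the ordered points $1-\alpha<0<1$. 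An identical calculation shows $B\le 1$ is the three-point concavity inequality for $g$ at $0<1<\alpha$. Consequently, both bounds follow at once from concavity of $g$ on all of $\mathbb{R}$, i.e.\ from log-concavity of $F$. The edge cases $\alpha=1$ and $a=b$ reduce to $A=B=1$ trivially.

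The main work, then, is establishing that $F$ is log-concave. The cleanest route is Pr\'ekopa's theorem: the joint integrand $(x,t)\mapsto \mathbf{1}_{[a,b]}(x)\,\phi_{\mu\sigma}(x-t\mu)$ is log-concave on $\mathbb{R}^{2}$ (the Gaussian factor has log equal to a concave quadratic in $(x,t)$; the indicator of the convex strip $[a,b]\times\mathbb{R}$ is log-concave), and hence its marginal $F(t)=\int \mathbf{1}_{[a,b]}(x)\,\phi_{\mu\sigma}(x-t\mu)\,\mathrm{d}x$ is log-concave in $t$.

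An elementary alternative requiring no external machinery is direct differentiation. Writing $u(t)=(b-t\mu)/(\mu\sigma)$ and $v(t)=(a-t\mu)/(\mu\sigma)$ and using $\phi'(x)=-x\phi(x)$, one obtains $F'(t)=\sigma^{-1}[\phi(v)-\phi(u)]$ and $F''(t)=\sigma^{-2}[v\phi(v)-u\phi(u)]$, so $(\ln F)''\le 0$ reduces to
\[
\bigl[v\phi(v)-u\phi(u)\bigr]\bigl[\Phi(u)-\Phi(v)\bigr]\;\le\;\bigl[\phi(v)-\phi(u)\bigr]^{2}.
\]
Using $\int_{v}^{u}x\phi\,\mathrm{d}x=\phi(v)-\phi(u)$ and $\int_{v}^{u}x^{2}\phi\,\mathrm{d}x=v\phi(v)-u\phi(u)+\Phi(u)-\Phi(v)$, this inequality is seen to be precisely the statement that the variance of a standard normal conditioned on $[v,u]$ is at most $1$, which is a classical consequence of the fact that $\phi$ is a $1$-strongly log-concave density (Brascamp-Lieb in one line).

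The main obstacle is exactly this log-concavity step; once it is granted, what remains is the two-line three-point slope comparison applied once for $A$ and once for $B$.
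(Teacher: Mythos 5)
Your reduction is the same as the paper's: the paper also rewrites $A$ and $B$ in terms of the shifted interval probability (its $h(x)=\Phi\bigl(\tfrac{b-x\mu}{\mu\sigma}\bigr)-\Phi\bigl(\tfrac{a-x\mu}{\mu\sigma}\bigr)$, your $F$), takes logarithms, and observes that both inequalities are three-point concavity statements for $\ln h$ at $\{1-\alpha,0,1\}$ and $\{0,1,\alpha\}$, so everything hinges on $(\ln h)''\le 0$. Where you genuinely diverge is in how that log-concavity is established. The paper argues by hand: it writes $(\ln h)''=\frac{hh''-(h')^2}{h^2}$, disposes of the case $h''\le 0$ directly, and for $h''>0$ bounds $\int e^{-t^2/2}\,dt\le\int\frac{(t^2+1)e^{-t^2/2}}{t^2}\,dt$ (evaluating the latter via the antiderivative $-e^{-t^2/2}/t$) and then invokes Acz\'el's inequality to get $hh''\le (h')^2$; this is elementary but delicate (the antiderivative evaluation and the sign hypotheses of Acz\'el's inequality require the truncated arguments to stay away from zero, a point the paper does not discuss). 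You instead recognize $FF''-(F')^2\le 0$ as exactly the statement that the variance of a standard normal conditioned to $[v,u]$ is at most $1$, obtainable from Brascamp--Lieb (or, even more directly, get log-concavity of $F$ from Pr\'ekopa's theorem on marginals of jointly log-concave functions). Your computations of $F'$, $F''$ and the moment identities check out, and the slope-comparison step is correct for all $\alpha>1$; so your argument is sound, with the trade-off that it outsources the key lemma to standard (cited) results where the paper attempts a self-contained derivation, and in exchange it avoids the paper's case analysis and its implicit positivity assumptions. One small quibble: in the degenerate case $a=b$ the quantities $A,B$ are of the form $0/0$ rather than equal to $1$, so that edge case is vacuous rather than "trivially $A=B=1$"; for $\alpha=1$ your claim is fine.
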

\begin{proof} We let $c=1/\sigma, p=b/\mu, q=a/\mu$  and $h(x)=\int^{c(p-x)}_{c(q-x)}{\Phi(t)}dt$.

So, proving $A, B \leq 1$ is equivalent to proving the following:
\begin{align}
\begin{split}
\begin{aligned}
         & [h(1)]^{\alpha-1} \cdot h(1-\alpha)\leq {[h(0)]^\alpha},\\
        & [h(0)]^{\alpha-1} \cdot h(\alpha)\leq {[h(1)]^\alpha}. \\
\end{aligned}
\end{split}
\end{align}

Further, to prove as follows:
\begin{align} \label{equ:ln}
\begin{split}
\begin{aligned}
         & \frac{\alpha-1}{\alpha} \cdot \ln[h(1)] + \frac{1}{\alpha} \cdot \ln[h(1-\alpha)] \leq \ln[h(0)],\\
        & \frac{\alpha-1}{\alpha} \cdot \ln[h(0)] + \frac{1}{\alpha} \cdot \ln[h(\alpha)] \leq \ln[h(1)]. \\
\end{aligned}
\end{split}
\end{align}

If $\ln^{\prime\prime}[h(x)] \leq 0$, $\ln[h(x)]$ is a concave function, and according to the properties of concave functions, Inequality 2 is proved.

The second derivative of $\ln{[h(x)]}$ is:
\begin{align}
\begin{split}
\begin{aligned}
         \ln^{\prime\prime}{[h(x)}]
          & = \frac{h(x)h''(x)-[h'(x)]^2}{[h(x)]^2}.\\
\end{aligned}
\end{split}
\end{align}

Because $h(x), [h'(x)]^2, [h(x)]^2>0$, when $h''(x) \leq 0$, $\ln''[h(x)]<0$ is hold.

Further, we can get:
\begin{align}
	\begin{split}
		\begin{aligned}
        h(x)h''(x) = & \frac{1}{2\pi}[\int^{c(p-x)}_{c(q-x)}{e^{-\frac{t^2}{2}}}dt] 
          [c^3(q-x)e^{-\frac{c^2(q-x)^2}{2}}- \\ 
        & c^3(p-x)e^{-\frac{c^2(p-x)^2}{2}}] \\
        \leq &  \frac{1}{2\pi}[\int^{c(p-x)}_{c(q-x)}{\frac{t^2e^{-\frac{t^2}{2}}+e^{-\frac{t^2}{2}}}{t^2}}dt][c^3(q-x)e^{-\frac{c^2(q-x)^2}{2}}- \\
        & c^3(p-x)e^{-\frac{c^2(p-x)^2}{2}}] \\
        = & \frac{1}{2\pi}[\frac{e^{-\frac{c^2(q-x)^2}{2}}}{c(q-x)}-\frac{e^{-\frac{c^2(p-x)^2}{2}}}{c(p-x)}][c^3(q-x)e^{-\frac{c^2(q-x)^2}{2}}- \\
        & c^3(p-x)e^{-\frac{c^2(p-x)^2}{2}}] \\
		\end{aligned}
	\end{split}
\end{align}

When $h''(x)>0$, according definition~\ref{definition:Aczel's Inequality}, we have:
\begin{align}
	\begin{split}
		\begin{aligned}
        h(x)h''(x) \leq & \frac{1}{2\pi}[ce^{-\frac{c^2(q-x)^2}{2}}-ce^{-\frac{c^2(p-x)^2}{2}}]^2=[h'(x)]^2 \\
		\end{aligned}
	\end{split}
\end{align}

So $\ln''[h(x)]<0$ is hold, Theorem~\ref{theorem:Truncated Gaussian mechanism} is proved.

\end{proof}

\begin{definition}({\bf Aczel's Inequality~\cite{tian2014new}}) \label{definition:Aczel's Inequality} Let $a_{i}>0, b_{i}>0(i=1,2, \ldots, n), a_{1}^{2}-\sum_{i=2}^{n} a_{i}^{2}>0, b_{1}^{2}-\sum_{i=2}^{n} b_{i}^{2}>0$. Then:
\begin{align}
\left(a_{1}^{2}-\sum_{i=2}^{n} a_{i}^{2}\right)\left(b_{1}^{2}-\sum_{i=2}^{n} b_{i}^{2}\right) \leq\left(a_{1} b_{1}-\sum_{i=2}^{n} a_{i} b_{i}\right)^{2}
\end{align}
\end{definition}

This proves Algorithm~\ref{selective publish of Gaussian} satisfies the same $(\alpha,{\alpha}/{2\sigma^2})$-RDP as Gaussian mechanism of RDP~\cite{IlyaMironov2017RnyiDP}.
\section{Truncated Laplace Mechanism}
\label{sec:Truncated Laplace Mechanism}

In this Section, we introduce the truncated Laplace mechanism. and perform a privacy analysis of the truncated Laplace mechanism from the perspective of RDP.

\subsection{Algorithm Description}
\begin{algorithm}
\caption{Laplace mechanism with selective release}\label{selective publish of Laplace}
\KwIn{function $f(\cdot)$, dataset $D$, Laplace distribution $Lap$, a given interval $[a,b]$}
\KwOut{$A(D)$ that falls in the interval $[a,b]$ after adding Laplace noise}
$f^{\prime}(D)=min(max(f(D),0),\mu)$ \;
$A(D)=f^{\prime}(D)+Lap(0,\lambda) $\;
\While{ $ A(D) < a$ or $A(D) > b$}
{$A(D)=f^{\prime}(D)+Lap(0,\lambda) $\;
}
\Return $A(D)$
\end{algorithm}
Now we describe the algorithm of truncated Laplace mechanism in Algorithm~\ref{selective publish of Laplace}. First the output of the function $f(\cdot)$ needs to be clipped to $[0,\mu]$ obtain the sensitivity $\mu$, followed by adding Laplace noise with mean 0 and distribution parameters $\lambda$ to the clipped values. Then determine whether the noise-added value is in the interval $[a,b]$, and output it if it is, otherwise re-add Laplace noise with the same coefficients until the noise-added value satisfies the interval before outputting it.

\subsection{Main Result}

\begin{figure}[htb]
  \centering
  \begin{subfigure}{0.49\linewidth}
    \centering
    \includegraphics[width=1.0\linewidth]{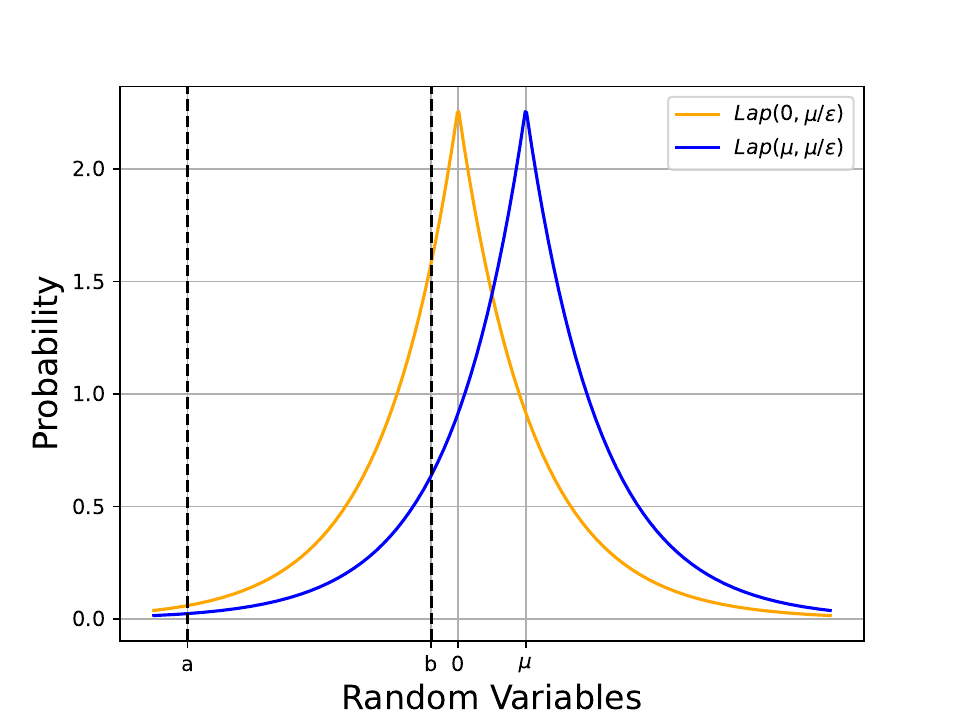}
    \caption{Before truncation.}
    \label{figure: The CDF of Laplace distribution}
  \end{subfigure}
  \hfill
  \begin{subfigure}{0.49\linewidth}
    \centering
    \includegraphics[width=1.0\linewidth]{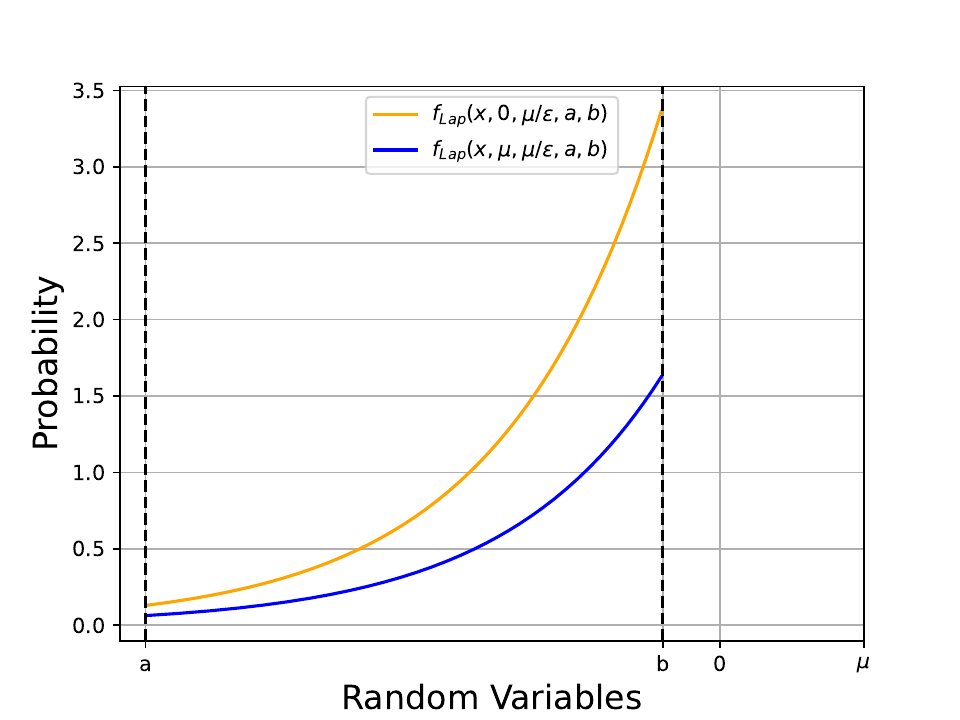}
    \caption{After truncation.}
    \label{figure: Truncated Laplace distribution}
  \end{subfigure}
  \vspace{5mm}
  \caption{Before and after truncating the Laplace distribution.}
  \label{figure: Before and after truncating the Laplace distribution}
\end{figure}

\begin{theorem} \label{definition:selective publish of Laplace}
Algorithm~\ref{selective publish of Laplace} satisfies $(\alpha,\frac{1}{\alpha-1} \log \left\{\frac{\alpha}{2 \alpha-1} \exp \left(\frac{\mu\alpha-\mu}{\lambda}\right)\right. \\
\left.+\frac{\alpha-1}{2 \alpha-1} \exp \left(\frac{-\alpha\mu}{\lambda}\right)\right\})-RDP$ when $a<b<0$ or $\mu<a<b$ or $0<a<b<\mu$.
\end{theorem}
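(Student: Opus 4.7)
The plan is to mirror the Gaussian proof: write down both truncated Laplace densities on $[a,b]$, substitute into the R\'enyi divergence, and reduce the resulting expression to the untruncated bound of Definition~\ref{definition:Laplace noise of RDP}. Writing $Z_0$ and $Z_\mu$ for the masses of the unnormalized Laplace densities $p_0(x)=\frac{1}{2\lambda}e^{-|x|/\lambda}$ and $p_\mu(x)=\frac{1}{2\lambda}e^{-|x-\mu|/\lambda}$ over $[a,b]$, the truncated densities are $p_0/Z_0$ and $p_\mu/Z_\mu$, so
\[
e^{(\alpha-1)D_\alpha} \;=\; \frac{Z_\mu^{\alpha-1}}{Z_0^{\alpha}}\int_a^b \frac{p_0(x)^\alpha}{p_\mu(x)^{\alpha-1}}\,dx,
\]
and the task is to show this is bounded above by $\frac{\alpha}{2\alpha-1}e^{(\alpha-1)\mu/\lambda}+\frac{\alpha-1}{2\alpha-1}e^{-\alpha\mu/\lambda}$.

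The three intervals in the hypothesis are precisely those on which neither $|x|$ nor $|x-\mu|$ changes sign, so on $[a,b]$ both $p_0$ and $p_\mu$ are pure exponentials. In the outer cases $a<b<0$ and $\mu<a<b$, the ratio $p_0(x)/p_\mu(x)$ is the constant $e^{\mu/\lambda}$ or $e^{-\mu/\lambda}$ respectively; after renormalization on $[a,b]$ the two truncated densities coincide, so $D_\alpha = 0$ and the claim is immediate. The substantive case is $0<a<b<\mu$, where $p_0(x)=\frac{1}{2\lambda}e^{-x/\lambda}$ and $p_\mu(x)=\frac{1}{2\lambda}e^{(x-\mu)/\lambda}$, and the two truncated distributions genuinely differ.

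For this middle case, $Z_0$, $Z_\mu$, and the integral are all elementary exponential integrals. Carrying out the arithmetic and collecting terms, I expect the $a,b$ dependence to collapse to a function of the single scaled length $d=(b-a)/\lambda$, and the parallel computation for the reversed order $D_\alpha(p_\mu^{\mathrm{tr}}\|p_0^{\mathrm{tr}})$ should produce the identical expression, namely
\[
e^{(\alpha-1)D_\alpha} \;=\; \phi(d) \;:=\; \frac{e^{\alpha d}-e^{-(\alpha-1)d}}{(2\alpha-1)(e^d-1)}.
\]
Writing $m=\mu/\lambda$, the inequality to be proved becomes $\phi(d)\le \frac{\alpha}{2\alpha-1}e^{(\alpha-1)m}+\frac{\alpha-1}{2\alpha-1}e^{-\alpha m}$ for every $0<d<m$. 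I would first check that $\phi$ is monotone increasing on $(0,\infty)$, which reduces the problem to the endpoint $d=m$, and then settle the endpoint inequality by direct expansion.

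After clearing positive denominators and regrouping with the identity $e^{(\alpha+1)x}-e^{-(\alpha-1)x}=2e^{x}\sinh(\alpha x)$ and its shift, both the monotonicity of $\phi$ and the endpoint inequality should reduce to the same scalar claim
\[
(\alpha-1)\sinh(\alpha x) \;\ge\; \alpha\,\sinh((\alpha-1)x), \qquad x>0,
\]
which is equivalent to $c\mapsto \sinh(cx)/c$ being increasing on $c>0$, itself a direct consequence of $y\ge \tanh y$. The main obstacle I anticipate is bookkeeping: tracking the $e^{u}$, $e^{v}$, $e^{(2\alpha-1)u}$ factors through the algebraic collapse and verifying that both orientations of the divergence really yield the same $\phi(d)$. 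Once that is done, I would isolate the hyperbolic inequality as a standalone lemma, mirroring the role of Theorem~\ref{theorem:Truncated Gaussian mechanism} in the Gaussian proof, so that the argument factors into a clean algebraic simplification followed by one short scalar inequality.
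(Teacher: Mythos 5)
Your proposal is sound and its key computations check out: the case split is the same as the paper's, cases $a<b<0$ and $\mu<a<b$ do collapse to identical truncated densities (so $D_\alpha=0$, which is below the stated bound since, by Jensen as in Theorem~\ref{theorem:Truncated lap mechanism}, the bracketed expression is at least $1$), and in the case $0<a<b<\mu$ the divergence in both orientations really does reduce to
$e^{(\alpha-1)D_\alpha}=\bigl(e^{\alpha d}-e^{-(\alpha-1)d}\bigr)/\bigl[(2\alpha-1)(e^{d}-1)\bigr]$ with $d=(b-a)/\lambda$, which is the paper's expression once its typos ($1-a$ for $1-\alpha$, the sign of $1-2\alpha$, $e^{(-b-a)/\lambda}$ for $e^{-(b-a)/\lambda}$) are corrected; your endpoint inequality at $d=m=\mu/\lambda$ is, after clearing $(2\alpha-1)(e^{d}-1)>0$, exactly $(\alpha-1)\sinh(\alpha d)\ge\alpha\sinh((\alpha-1)d)$, as you predict. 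Where you genuinely diverge from the paper is in how the inequality $\phi(d)\le\frac{\alpha}{2\alpha-1}e^{(\alpha-1)m}+\frac{\alpha-1}{2\alpha-1}e^{-\alpha m}$ for $d<m$ is finished: the paper freezes the left side and uses monotonicity of the \emph{right} side, showing $g(t)=(\alpha-1)t^{\alpha}+\alpha t^{1-\alpha}$ is decreasing on $(0,1)$ so that it suffices to treat $m=d$ (Theorem~\ref{theorem:Truncated lap mechanism2}), and then proves that diagonal case by an algebraic rearrangement plus the divided-difference lemma Theorem~\ref{theorem:Truncated lap mechanism3} for functions with $f'f'''>0$; you instead use monotonicity of the \emph{left} side $\phi$ and dispose of the diagonal by the elementary hyperbolic-sine inequality (equivalently $c\mapsto\sinh(cx)/c$ increasing). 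Your route buys a much shorter key lemma than the paper's Theorem~\ref{theorem:Truncated lap mechanism3}; the paper's reduction buys a one-line monotonicity step, whereas yours requires a separate argument for $\phi$. One small correction there: the monotonicity of $\phi$ does \emph{not} reduce to the same sinh claim; writing $\phi(d)=\sinh\bigl((2\alpha-1)\tfrac{d}{2}\bigr)/\bigl[(2\alpha-1)\sinh(\tfrac{d}{2})\bigr]$, it is equivalent to $x\mapsto x\coth x$ being increasing, a different (though equally elementary, again via $y\ge\tanh y$) fact that you would need to prove explicitly; alternatively you could adopt the paper's reduction through $g$ and skip the monotonicity of $\phi$ altogether.
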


\begin{proof}
    
Figure~\ref{figure: Before and after truncating the Laplace distribution} plots two Laplace distributions, namely $Lap(0,\lambda)$ and $Lap(\mu,\lambda)$, as the output probability distributions of function $f$ on two neighboring datasets whose sensitivity is $\mu$. By selective update and selective release in the threshold range $[a,b]$, the probability distribution outside of this interval will accumulate within the interval. As such, the final output distribution is truncated and transformed into a truncated normal distribution, as shown in Figure~\ref{figure: Truncated Laplace distribution}. As $x$ is assumed to follow a normal distribution, the truncated normal distribution with mean $0$ and mean $\mu$ can be represented as follows:

\begin{align*}
f_{Lap}(x ; 0, \lambda, a, b)=\left\{\begin{array}{ll} \frac{1}{2\lambda} e^{-\frac{|x|}{\lambda}} \cdot \frac{1}{\int_{a}^{b} \frac{1}{2\lambda} e^{-\frac{|t|}{\lambda}} dt} & a \leq x \leq b \\
0 & \text { otherwise } ,
\end{array}\right.
\end{align*}

\begin{align*}
f_{Lap}(x ; \mu, \lambda, a, b)=\left\{\begin{array}{ll} \frac{1}{2\lambda} e^{-\frac{|x-\mu|}{\lambda}} \cdot \frac{1}{\int_{a}^{b} \frac{1}{2\lambda} e^{-\frac{|t-\mu|}{\lambda}} dt} & a \leq x \leq b \\
0 & \text { otherwise } ,\\
\end{array}\right.
\end{align*}

Then  we substitute the two truncated Laplace distributions into Rényi divergence \cite{van2014renyi} to calculate the RDP as follows:

\begin{align}
\begin{split} 
\begin{aligned}
D_{\alpha}( & f_{Lap}(x ; 0, \lambda, a, b )||f_{Lap}(x ; \mu, \lambda, a, b)) \\
= & \frac{1}{\alpha-1} \cdot \ln \int_{a}^{b} \frac{[f_{Lap}(x ; 0, \lambda, a, b )]^\alpha}{[f_{Lap}(x ; \mu, \lambda, a, b)]^{\alpha-1}} \mathrm{d} x \\
= & \frac{1}{\alpha-1} \cdot \log \int_{a}^{b} \frac{e^{[-\alpha|x|+(\alpha-1)|x-\mu|]/\lambda}}{(\int_{a}^{b} e^{-\frac{|x|}{\lambda}} dt)^\alpha \cdot (\int_{a}^{b} e^{-\frac{|x-\mu|}{\lambda}} dt)^{1-\alpha}} \mathrm{d} x \\
= & \frac{1}{\alpha-1} \cdot \log  \frac{\int_{a}^{b} e^{[-\alpha|x|+(\alpha-1)|x-\mu|]/\lambda} \mathrm{d} x}{(\int_{a}^{b} e^{-\frac{|x|}{\lambda}} dt)^\alpha \cdot (\int_{a}^{b} e^{-\frac{|x-\mu|}{\lambda}} dt)^{1-\alpha}} \\
\end{aligned}
\end{split}
\end{align}

Similarly, we can get:
\begin{align}
\begin{split} 
\begin{aligned}
D_{\alpha}( & f_{Lap}(x ; \mu, \lambda, a, b) || f_{Lap}(x ; 0, \lambda, a, b )) \\
= & \frac{1}{\alpha-1} \cdot \log  \frac{\int_{a}^{b} e^{[-(1-\alpha)|x|+\alpha|x|]/\lambda} \mathrm{d} x}{(\int_{a}^{b} e^{-\frac{|x|}{\lambda}} dt)^{1-\alpha} \cdot (\int_{a}^{b} e^{-\frac{|x-\mu|}{\lambda}} dt)^{\alpha}} \\
\end{aligned}
\end{split}
\end{align}

Next, we discuss the range of values in the truncated interval [a,b] in the following three scenarios:

\textbf{I. $a<b<0$}
\begin{align}
\begin{split} 
\begin{aligned}
D_{\alpha}( & f_{Lap}(x ; 0, \lambda, a, b )||f_{Lap}(x ; 1, \lambda, a, b)) \\
= & \frac{1}{\alpha-1} \cdot \log  \frac{\lambda(e^{\frac{b-\mu+\alpha\mu}{\lambda}}-e^{\frac{a-\mu+\alpha\mu}{\lambda}})}{\lambda^\alpha (e^{\frac{b}{\lambda}}-e^{\frac{a}{\lambda}})^\alpha \cdot \lambda^{1-\alpha} (e^{\frac{b-\mu}{\lambda}}-e^{\frac{a-\mu}{\lambda}})^{1-\alpha} } \\
= & \frac{1}{\alpha-1} \cdot \log  \frac{e^{\frac{\alpha\mu-\mu}{\lambda}} \cdot (e^{\frac{b}{\lambda}}-e^{\frac{a}{\lambda}})}{(e^{\frac{b}{\lambda}}-e^{\frac{a}{\lambda}})^\alpha \cdot (e^\frac{-\mu}{\lambda})^{1-\alpha} \cdot (e^{\frac{b}{\lambda}}-e^{\frac{a}{\lambda}})^{1-\alpha} } \\
= & \frac{1}{\alpha-1} \cdot \log  1 \\
\end{aligned}
\end{split}
\end{align}

Similarly, we can get:
\begin{align}
\begin{split} 
\begin{aligned}
D_{\alpha}( & f_{Lap}(x ; \mu, \lambda, a, b) || f_{Lap}(x ; 0, \lambda, a, b )) =  \frac{1}{\alpha-1} \cdot \log  1 \\
\end{aligned}
\end{split}
\end{align}

\textbf{II. $\mu<a<b$}
\begin{align}
\begin{split} 
\begin{aligned}
D_{\alpha}( & f_{Lap}(x ; 0, \lambda, a, b )||f_{Lap}(x ; 1, \lambda, a, b)) \\
= & \frac{1}{\alpha-1} \cdot \log  \frac{\lambda(e^{\frac{-b+\mu-\alpha\mu}{\lambda}}-e^{\frac{-a+\mu-\alpha\mu}{\lambda}})}{\lambda^\alpha (e^{\frac{-b}{\lambda}}-e^{\frac{-a}{\lambda}})^\alpha \cdot \lambda^{1-\alpha} (e^{\frac{\mu-b}{\lambda}}-e^{\frac{\mu-a}{\lambda}})^{1-\alpha} } \\
= & \frac{1}{\alpha-1} \cdot \log  \frac{e^{\frac{\mu-\alpha\mu}{\lambda}} \cdot (e^{\frac{-b}{\lambda}}-e^{\frac{-a}{\lambda}})}{(e^{\frac{-b}{\lambda}}-e^{\frac{-a}{\lambda}})^\alpha \cdot (e^\frac{\mu}{\lambda})^{1-\alpha} \cdot (e^{\frac{-b}{\lambda}}-e^{\frac{-a}{\lambda}})^{1-\alpha} } \\
= & \frac{1}{\alpha-1} \cdot \log  1 \\
\end{aligned}
\end{split}
\end{align}

Similarly, we can get:
\begin{align}
\begin{split} 
\begin{aligned}
D_{\alpha}( & f_{Lap}(x ; \mu, \lambda, a, b) || f_{Lap}(x ; 0, \lambda, a, b )) =  \frac{1}{\alpha-1} \cdot \log  1 \\
\end{aligned}
\end{split}
\end{align}



\textbf{III. $0<a<b<\mu$}
\begin{align}
\begin{split} 
\begin{aligned}
D_{\alpha}( & f_{Lap}(x ; 0, \lambda, a, b )||f_{Lap}(x ; \mu, \lambda, a, b)) \\
= & \frac{1}{\alpha-1} \cdot \log  \frac{\frac{\lambda}{1-2\alpha}(e^{\frac{(1-2\alpha)b-\mu+\mu\alpha}{\lambda}}-e^{\frac{(1-2\alpha)a-\mu+\mu\alpha}{\lambda}})}{\lambda^\alpha (e^{\frac{-a}{\lambda}}-e^{\frac{-b}{\lambda}})^\alpha \cdot \lambda^{1-\alpha} (e^{\frac{b-\mu}{\lambda}}-e^{\frac{a-\mu}{\lambda}})^{1-\alpha} } \\
= & \frac{1}{\alpha-1} \cdot \log  \frac{e^{\frac{\mu\alpha-\mu}{\lambda}}\cdot (e^{\frac{(1-2\alpha)b}{\lambda}}-e^{\frac{(1-2\alpha)a}{\lambda}})}{(1-2\alpha) \cdot (e^{\frac{-a}{\lambda}}-e^{\frac{-b}{\lambda}})^\alpha \cdot  (e^{\frac{-\mu}{\lambda}})^{1-\alpha} \cdot (e^{\frac{b}{\lambda}}-e^{\frac{a}{\lambda}})^{1-\alpha} } \\
= & \frac{1}{\alpha-1} \cdot \log  \frac{ (e^{\frac{(1-2\alpha)b}{\lambda}}-e^{\frac{(1-2\alpha)a}{\lambda}})}{(1-2\alpha) \cdot (e^{\frac{-a}{\lambda}}-e^{\frac{-b}{\lambda}})^\alpha \cdot (e^{\frac{b}{\lambda}}-e^{\frac{a}{\lambda}})^{1-\alpha} } \\
= & \frac{1}{\alpha-1} \cdot \log  \frac{(e^{-\frac{a}{\lambda}})^{(2 \alpha-1)}-(e^{-\frac{b}{\lambda}})^{(2 \alpha-1)}}{(1-2\alpha) \cdot (e^{-\frac{a}{\lambda}}-e^{-\frac{b}{\lambda}})} \cdot e^{-\frac{(a+b)(1-\alpha)}{\lambda}} \\
\end{aligned}
\end{split}
\end{align}
We let $x_2=e^{-\frac{a}{\lambda}}$ and $x_1=e^{-\frac{b}{\lambda}}$, we can get:

\begin{align}
\begin{split} 
\begin{aligned}
D_{\alpha}( & f_{Lap}(x ; 0, \lambda, a, b )||f_{Lap}(x ; \mu, \lambda, a, b)) \\
= &\frac{1}{\alpha-1} \cdot \log \frac{1}{1-2\alpha} \cdot \frac{x_{2}^{(2 \alpha-1)}-x_{1}^{(2 \alpha-1)}}{x_{2}-x_{1}} \cdot\left(x_{2} x_{1}\right)^{(1-\alpha)} \\
= & \frac{1}{\alpha-1} \cdot \log  \frac{1}{1-2\alpha} \cdot   \left(\sum_{i=1}^{2 \alpha-1} x_{2}^{(2 \alpha-1-i)} x_{1}^{(i-1)}\right)\left(x_{2} x_{1}\right)^{(1-\alpha)} \\
= & \frac{1}{\alpha-1} \cdot \log  \frac{1}{1-2\alpha} \cdot  \sum_{i=1}^{2 \alpha-1} (\frac{x_1}{x_2})^{i-\alpha} \\
= & \frac{1}{\alpha-1} \cdot \log  \frac{1}{1-2\alpha} \cdot  \frac{e^{-\frac{(1-a)(b-a)}{\lambda}}-e^{-\frac{\alpha(b-a)}{\lambda}}}{1-e^{\frac{-b-a}{\lambda}}}
\end{aligned}
\end{split}
\end{align}

Similarly, we can get:
\begin{align}
\begin{split} 
\begin{aligned}
D_{\alpha}( & f_{Lap}(x ; \mu, \lambda, a, b) || f_{Lap}(x ; 0, \lambda, a, b )) \\
= & \frac{1}{\alpha-1} \cdot \log  \frac{1}{1-2\alpha} \cdot  \frac{e^{-\frac{(1-a)(b-a)}{\lambda}}-e^{-\frac{\alpha(b-a)}{\lambda}}}{1-e^{\frac{-b-a}{\lambda}}}
\end{aligned}
\end{split}
\end{align}

According Theorem~\ref{theorem:Truncated lap mechanism} and Theorem~\ref{theorem:Truncated lap mechanism2}, Theorem~\ref{definition:selective publish of Laplace} is proved.

\end{proof}


\begin{theorem} \label{theorem:Truncated lap mechanism}
 $ 1 \leq \frac{\alpha}{2 \alpha-1} \exp \left(\frac{\mu\alpha-\mu}{\lambda}\right)+\frac{\alpha-1}{2 \alpha-1} \exp \left(\frac{-\mu\alpha}{\lambda}\right)$,\; when 
$\lambda>0$ and $\alpha \ge 1$.
\end{theorem}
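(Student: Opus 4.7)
The plan is to recognize the right-hand side as a convex combination of two exponentials and apply Jensen's inequality. First I would introduce the convenient substitution $t = \mu/\lambda$, which under the hypothesis $\lambda > 0$ (and the implicit $\mu \geq 0$ from its role as a sensitivity) is nonnegative. Under this substitution the claim becomes
\begin{equation*}
1 \;\leq\; \frac{\alpha}{2\alpha-1}\,e^{(\alpha-1)t} \;+\; \frac{\alpha-1}{2\alpha-1}\,e^{-\alpha t}.
\end{equation*}

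The key observation, and the entire content of the proof, is that the two coefficients on the right form a genuine probability distribution: for $\alpha \geq 1$ both weights $\frac{\alpha}{2\alpha-1}$ and $\frac{\alpha-1}{2\alpha-1}$ are nonnegative, and they sum to $1$. Moreover, the weighted average of the two exponents vanishes:
\begin{equation*}
\frac{\alpha}{2\alpha-1}\cdot(\alpha-1)t \;+\; \frac{\alpha-1}{2\alpha-1}\cdot(-\alpha)t \;=\; \frac{\alpha(\alpha-1)\,-\,(\alpha-1)\alpha}{2\alpha-1}\,t \;=\; 0.
\end{equation*}
Since $x \mapsto e^x$ is convex, Jensen's inequality then immediately delivers
\begin{equation*}
1 \;=\; e^{0} \;\leq\; \frac{\alpha}{2\alpha-1}\,e^{(\alpha-1)t} \;+\; \frac{\alpha-1}{2\alpha-1}\,e^{-\alpha t},
\end{equation*}
which is the desired inequality.

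A backup route, should one prefer to avoid Jensen, is a direct calculus argument: define $g(t)$ to be the right-hand side, observe $g(0) = 1$, and compute $g'(t) = \frac{\alpha(\alpha-1)}{2\alpha-1}\bigl[e^{(\alpha-1)t} - e^{-\alpha t}\bigr]$, which is nonnegative on $t \geq 0$ because $(\alpha-1)t \geq 0 \geq -\alpha t$ and $\alpha(\alpha-1) \geq 0$. Hence $g$ is nondecreasing on $[0,\infty)$ and so $g(t) \geq 1$. I do not anticipate any serious obstacle here: the boundary case $\alpha = 1$ degenerates to the trivial identity $1 \leq 1$, and the inequality is in fact an equality exactly when $\mu = 0$, so the bound is tight and no subtle case analysis is needed.
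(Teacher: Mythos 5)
Your proposal is correct and coincides with the paper's own argument: the paper likewise notes that $e^x$ is convex and applies Jensen's inequality to the weights $\frac{\alpha}{2\alpha-1}$ and $\frac{\alpha-1}{2\alpha-1}$, whose weighted exponent averages to zero. Your write-up merely spells out the details (and adds an optional monotonicity argument) that the paper leaves implicit.
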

\begin{proof}
Because $(e^x)^{\prime\prime} \ge 0$, $e^x$ is a convex function. According Jensen inequality, we have:
\begin{align} \label{equ:convex}
\begin{split}
\begin{aligned}
         & e^{a_1 x_1 + a_2 x_2} \leq a_1 e^{x_1} + a_2 e^{x_2}.\\
\end{aligned}
\end{split}
\end{align}
So the Theorem~\ref{theorem:Truncated lap mechanism} is proved.
\end{proof}

\begin{theorem} \label{theorem:Truncated lap mechanism2}
 $  \frac{\exp{[-(1-\alpha)(b-a)/\lambda]}-\exp{[-\alpha(b-a)/\lambda]}}{1-\exp{[(-b-a)/\lambda]}} \leq \alpha \exp \left(\frac{\alpha-1}{\lambda}\right)+(\alpha-1) \exp \left(\frac{-\alpha}{\lambda}\right)$, when 
$\lambda>0$, $\alpha \ge 1$ and $b>a>\mu$.
\end{theorem}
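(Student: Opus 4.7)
My plan is to reduce the two-variable inequality in $a,b$ to a single-variable one in the gap $c:=(b-a)/\lambda$, and then dispatch the resulting bound with a short calculus argument. The inequality is applied in the proof of Theorem~\ref{definition:selective publish of Laplace} in Case III, i.e.\ $0<a<b<\mu$; under that regime $c\in(0,\mu/\lambda]$ and $s:=(a+b)/\lambda>c$. Since the denominator $1-e^{-s}$ is strictly increasing in $s$, the left-hand side is decreasing in $s$ for fixed $c$, so
\[
\text{LHS}\;\le\;F(c)\;:=\;\frac{e^{(\alpha-1)c}-e^{-\alpha c}}{1-e^{-c}},
\]
with the bound saturated in the limit $a\downarrow 0$. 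It therefore suffices to show that $F(c)$ is at most the right-hand side of Theorem~\ref{theorem:Truncated lap mechanism2} for every $c\in(0,\mu/\lambda]$.

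I then establish the stronger pointwise inequality $F(c)\le h(c)$ for all $c\ge 0$, where $h(t):=\alpha e^{(\alpha-1)t}+(\alpha-1)e^{-\alpha t}$. Monotonicity of $h$ on $[0,\infty)$, immediate from $h'(t)=\alpha(\alpha-1)\bigl(e^{(\alpha-1)t}-e^{-\alpha t}\bigr)\ge 0$ for $\alpha\ge 1$, then gives $h(c)\le h(\mu/\lambda)$, which recovers the stated right-hand side (after inserting $\mu$ in the natural place in the exponents). Clearing the positive factor $1-e^{-c}$ and substituting $u=e^{c}\ge 1$ reduces $F(c)\le h(c)$ to the single polynomial-exponential inequality
\[
g(u)\;:=\;(\alpha-1)u^{2\alpha}-\alpha u^{2\alpha-1}+\alpha u-(\alpha-1)\;\ge\;0.
\]

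This is the main technical step. I plan to prove it by a standard two-stage derivative argument: direct substitution gives $g(1)=g'(1)=0$, and the key observation is the clean factorization
\[
g''(u)\;=\;2\alpha(\alpha-1)(2\alpha-1)\,u^{2\alpha-3}(u-1),
\]
which is non-negative on $[1,\infty)$ for $\alpha\ge 1$. Two applications of the fundamental theorem of calculus then give $g'(u)\ge 0$ and hence $g(u)\ge 0$ on $[1,\infty)$, closing the chain $\text{LHS}\le F(c)\le h(c)\le h(\mu/\lambda)=\text{RHS}$. The degenerate case $\alpha=1$ is immediate since $g\equiv 0$ and $F(c)\equiv 1\le h(c)$. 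The only delicate computation is producing the factored form of $g''$, which rests on the identity $2\alpha-2=2(\alpha-1)$; once that grouping is in hand the proof is complete.
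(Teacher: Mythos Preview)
Your proposal is correct and shares the same overall skeleton as the paper's proof: both introduce $t_0=e^{-(b-a)/\lambda}$ (your $e^{-c}$) and $t_1=e^{-\mu/\lambda}$, observe that the right-hand side equals $g(t_1)$ for $g(t)=(\alpha-1)t^{\alpha}+\alpha t^{1-\alpha}$ (your $h$ in the variable $e^{-c}$), use monotonicity of $g$ on $(0,1)$ to reduce to $g(t_0)$, and then must show the single-variable inequality $\dfrac{t_0^{1-\alpha}-t_0^{\alpha}}{1-t_0}\le g(t_0)$, i.e.\ your $F(c)\le h(c)$. Your preliminary step bounding the denominator via $s>c$ is extra work caused by a typo in the statement (the intended denominator is $1-e^{-(b-a)/\lambda}$, not $1-e^{-(a+b)/\lambda}$); with the correction, the left-hand side is exactly your $F(c)$.

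Where you genuinely diverge from the paper is in dispatching $F(c)\le h(c)$. The paper rearranges this into a comparison of secant slopes of the map $x\mapsto t_0^{x}$ over the nested symmetric intervals $[2-\alpha,\alpha]\subset[1-\alpha,1+\alpha]$ and then invokes a separate lemma (their Theorem~4.4) asserting that for monotone $f$ with $f'f'''>0$ the secant slope over the wider interval dominates in absolute value. You instead clear denominators, substitute $u=e^{c}$, and prove the equivalent polynomial inequality $g(u)=(\alpha-1)u^{2\alpha}-\alpha u^{2\alpha-1}+\alpha u-(\alpha-1)\ge 0$ directly via $g(1)=g'(1)=0$ together with the clean factorization $g''(u)=2\alpha(\alpha-1)(2\alpha-1)u^{2\alpha-3}(u-1)\ge 0$. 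Your route is shorter and entirely self-contained, avoiding the auxiliary secant-slope lemma; the paper's route is slightly more conceptual but requires proving and carrying an additional general result. Both are valid, and your factorization of $g''$ is correct.
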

\begin{proof}
We let $t_0=e^{-\frac{b-a}{\lambda}}$ and $t_1=e^{-\frac{\mu}{\lambda}}$, so the original inequality is transformed as follows:
\begin{align} 
\begin{split}
\begin{aligned}
         & \frac{t_{0}^{(1-\alpha)}-t_{0}^{\alpha}}{1-t_{0}} \leq (\alpha-1) t_{1}^{\alpha}+\alpha t_{1}^{(1-\alpha)}. \\
\end{aligned}
\end{split}
\end{align}

We let $g(t)=(\alpha-1) t^{\alpha}+\alpha t^{(1-\alpha)}$, and we can get:
\begin{align} 
\begin{split}
\begin{aligned}
         g^{\prime}(t) & =\alpha(\alpha-1)\left[t^{(\alpha-1)}-t^{(-\alpha)}\right] \\
         & =\alpha(\alpha-1) \frac{\left(t^{2 \alpha-1}-1\right)}{t^{\alpha}}. \\
\end{aligned}
\end{split}
\end{align}
Because $\alpha \ge 1$, $g^{\prime}(t) \leq 0$ when $0<t<1$. As $0< t_1<t_0<1$, $g(t_0) \leq g(t_1)$. So the original inequality is transformed as follows:
\begin{align} 
\begin{split}
\begin{aligned}
         & (t_{0}^{(1-\alpha)}-t_{0}^{\alpha})/({1-t_{0}}) \leq (\alpha-1) t_{0}^{\alpha}+\alpha t_{0}^{(1-\alpha)}. \\
        & \rightarrow  {\left[(\alpha-1) t_{0}^{\alpha}+\alpha t_{0}^{(1-\alpha)}\right]\left(1-t_{0}\right)+t_{0}^{\alpha}-t_{0}^{(1-\alpha)} \geqslant 0 } \\
       & \rightarrow  \alpha t_{0}^{\alpha}+(\alpha-1) t_{0}^{(1-\alpha)}+(1-\alpha) t_{0}^{(\alpha+1)}-\alpha t_{0}^{(2-\alpha)} \geqslant 0 \\
      & \rightarrow  \alpha\left(t_{0}^{\alpha}-t_{0}^{(2-\alpha)}\right) \geqslant(1-\alpha)\left[t_{0}^{(1-\alpha)}-t_{0}^{(\alpha+1)}\right] \quad\left(\alpha>1, t_{0}<1\right) \\
     & \rightarrow  (t_{0}^{\alpha}-t_{0}^{(2-\alpha)})/({t_{0}^{(1-\alpha)}-t_{0}^{(\alpha+1)}}) \geqslant \frac{1-\alpha}{\alpha} \\
\end{aligned}
\end{split}
\end{align}
we let $h(x)=t_{0}^{x}$, and turn to prove the following inequality:
\begin{align} 
\begin{split}
\begin{aligned}
         \frac{h(\alpha)-h(2-\alpha)}{h(1-\alpha)-h(\alpha+1)} \geqslant \frac{1-\alpha}{\alpha}
\end{aligned}
\end{split}
\end{align}

Because $\alpha \ge 1$, $1-\alpha < 2-\alpha \leq \alpha < \alpha+1$. And turn to prove the following inequality:
\begin{align} 
\begin{split}
\begin{aligned}
         & \frac{h(\alpha)-h(2-\alpha)}{\alpha-(2-\alpha)} \geqslant \frac{h(1-\alpha)-h(\alpha+1)}{(1-\alpha)-(\alpha+1)} \cdot  \frac{ (1-\alpha)}{\alpha} \cdot \frac{(1-\alpha)-(\alpha+1)}{\alpha-(2-\alpha)} \\
          & \rightarrow \frac{h(\alpha)-h(2-\alpha)}{\alpha-(2-\alpha)} \geqslant \frac{h(1+\alpha)-h(1-\alpha)}{(1+\alpha)-(1-\alpha)} \\
\end{aligned}
\end{split}
\end{align}

$h^{\prime}(x)=t_0^x \ln t_0<0$ and $h^{\prime\prime}(x)=t_0^x \ln^3 t_0 < 0$. So according Theorem~\ref{theorem:Truncated lap mechanism3}, Theorem~\ref{theorem:Truncated lap mechanism2} is proved.
\end{proof}

\begin{theorem} \label{theorem:Truncated lap mechanism3}
For a monotone function $f(\cdot)$, if $f^{\prime}(x) \cdot f^{\prime\prime\prime}(x)>0$. then for $x_3<x_1 \leq x_2<x_4$ and $x_1+x_2=x_3+x_4$, the following equation holds:
\begin{align} 
\begin{split}
\begin{aligned}
\left|\frac{f\left(x_{2}\right)-f\left(x_{1}\right)}{x_{2}-x_{1}}\right|\leq \left|\frac{f\left(x_{3}\right)-f\left(x_{4}\right)}{x_{3}-x_{4}}\right|
\end{aligned}
\end{split}
\end{align}
\end{theorem}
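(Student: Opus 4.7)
The plan is to reduce the statement to a monotonicity property of a single auxiliary function by exploiting the symmetry $x_1+x_2=x_3+x_4$. Writing $m=(x_1+x_2)/2=(x_3+x_4)/2$ for the common midpoint, I set $s=(x_2-x_1)/2$ and $t=(x_4-x_3)/2$, so that $x_1=m-s$, $x_2=m+s$, $x_3=m-t$, $x_4=m+t$ with $0\le s<t$. The two absolute secant slopes in the statement then become
\[
\left|\frac{f(x_2)-f(x_1)}{x_2-x_1}\right|=|g(s)|,\qquad \left|\frac{f(x_4)-f(x_3)}{x_4-x_3}\right|=|g(t)|,
\]
where
\[
g(r):=\frac{f(m+r)-f(m-r)}{2r},\qquad r>0,
\]
with $g(0):=f'(m)$ by continuous extension. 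So it suffices to prove that $|g|$ is non-decreasing on $[0,\infty)$.

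Next I would compute $g'$ cleanly. Writing $h(r)=f(m+r)-f(m-r)$ so that $g(r)=h(r)/(2r)$, a direct differentiation yields
\[
g'(r)=\frac{r\,h'(r)-h(r)}{2r^{2}}=\frac{p(r)}{2r^{2}},
\]
with $p(r):=r h'(r)-h(r)$. Since $p(0)=0$ and $p'(r)=r h''(r)=r\bigl[f''(m+r)-f''(m-r)\bigr]$, the sign of $g'$ on $(0,\infty)$ is controlled entirely by the sign of $f''(m+r)-f''(m-r)$, which in turn is governed by the sign of $f'''$ by the mean value theorem.

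I would then split into the two cases allowed by the hypothesis $f'\cdot f'''>0$ and monotonicity of $f$. If $f'>0$ and $f'''>0$ everywhere, then $h(r)>0$ for $r>0$ (so $g>0$), and $f''$ is non-decreasing, giving $p'(r)\ge 0$, hence $p(r)\ge p(0)=0$, hence $g'(r)\ge 0$; thus $|g|=g$ is non-decreasing. If instead $f'<0$ and $f'''<0$, then $h(r)<0$ (so $g<0$), $f''$ is non-increasing, $p'(r)\le 0$, $p(r)\le 0$, $g'(r)\le 0$, and $|g|=-g$ is again non-decreasing. In either case $|g(s)|\le|g(t)|$, which is the claim, with the boundary case $s=0$ handled by the limit $g(0)=f'(m)$ and continuity of $|g|$.

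The one step that requires care — and that I expect to be the main obstacle in a fully rigorous write-up — is justifying the passage from the hypothesis $f'\cdot f'''>0$ globally to the local sign statement $f''(m+r)\ge f''(m-r)$ for all relevant $r$; this is immediate from the fundamental theorem of calculus applied to $f''$, but one must state clearly that $f$ is assumed thrice differentiable on an interval containing $[x_3,x_4]$ and that the sign of $f'''$ is constant there, which is implicit in the hypothesis. Everything else is a routine sign chase, and the plan's main content is the observation that the symmetry $x_1+x_2=x_3+x_4$ collapses the problem to monotonicity of the single variable function $|g(r)|$.
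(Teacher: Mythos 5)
Your proposal is correct and follows essentially the same route as the paper: both exploit the symmetry $x_1+x_2=x_3+x_4$ to reduce the claim to monotonicity, in the half-width, of the symmetric secant slope, and both prove that monotonicity by a derivative sign chase that splits into the cases $f'>0,f'''>0$ and $f'<0,f'''<0$ and runs the sign of $f'''$ through an auxiliary function vanishing at the degenerate width. Your midpoint parametrization $g(r)=\frac{f(m+r)-f(m-r)}{2r}$ with $p(r)=r\,h'(r)-h(r)$ is simply a cleaner rendering of the paper's $g(x)=\frac{f(x_4-x)-f(x_3+x)}{x_4-x_3-2x}$ and its auxiliary $h(s)$, so the underlying argument is the same.
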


\begin{proof}
Let $x=x_4-x_2=x_1-x_3$, so $0<x<\frac{x_4-x_3}{2}$. And let $g(x)=\frac{f(x_4-x)-f(x_3+x)}{x_4-x_3-2x}$, we have:
\begin{align} 
\begin{split}
\begin{aligned}
 g^{\prime}(x)& =\frac{1}{\left(x_{4}-x_{3}-2 x\right)^{2}}[\left(\frac{d f\left(x_{4}-x\right)}{d x}-\frac{d f\left(x_{3}+x\right)}{d x}\right)\left(x_{4}-x_{3}-2 x\right) \\
 & +2\left(f\left(x_{4}-x\right)-f\left(x_{3}+x\right)\right)] \\
 & = \frac{1}{\left(x_{4}-x_{3}-2 x\right)^{2}}[-\left(f^{\prime}\left(x_{1}\right)+f^{\prime}\left(x_{2}\right)\right)\left(x_{2}-x_{1}\right) \\
 & +2\left(f\left(x_{2}\right)-f\left(x_{1}\right)\right)] \\
\end{aligned}
\end{split}
\end{align}
Nest, we let $x_2-x_1=s>0$ and $h(s)=-s\left(f^{\prime}\left(x_{1}+s\right)+f^{\prime}\left(x_{1}\right)\right)+2\left(f\left(x_{1}+s\right)-f\left(x_{1}\right)\right)$, we have:
\begin{align} 
\begin{split}
\begin{aligned}
h^{\prime}(s) & =-\left(f^{\prime}\left(x_{1}+s\right)+f^{\prime}\left(x_{1}\right)\right)-s f^{\prime \prime}\left(x_{1}+s\right)+2 f^{\prime}\left(x_{1}+s\right) \\
& =-s f^{\prime \prime}\left(x_{1}+s\right)+f^{\prime}\left(x_{1}+s\right)-f^{\prime}\left(x_{1}\right)
\end{aligned}
\end{split}
\end{align}
and $h^{\prime \prime}(s)=-s f^{\prime \prime \prime}\left(x_{1}+s\right)$.

So, when $f^{\prime}(x)<0$ and $f^{\prime\prime\prime}(x)<0$, $h^{\prime \prime}(s)>0, h^{\prime}(s)>h^{\prime}(0)=0, h(s)>h(0)=0$ and $g^{\prime}(x) > 0$. So we have:
\begin{align} 
\begin{split}
\begin{aligned}
0 > g(x) &=\frac{f\left(x_{4}-x\right)-f\left(x_{3}+x\right)}{x_{4}-x_{3}-2 x} \\
&=\frac{f\left(x_{2}\right)-f\left(x_{1}\right)}{x_{2}-x_{1}} \\
&>g(0) > \frac{f\left(x_{4}\right)-f\left(x_{3}\right)}{x_{4}-x_{3}}
\end{aligned}
\end{split}
\end{align}

Inversely, when $f^{\prime}(x)>0$ and $f^{\prime\prime\prime}(x)>0$, $h^{\prime \prime}(s)<0, h^{\prime}(s)>h^{\prime}(0)=0, h(s)<h(0)=0$ and $g^{\prime}(x) <0$. So we have:
\begin{align} 
\begin{split}
\begin{aligned}
0 < g(x) &=\frac{f\left(x_{4}-x\right)-f\left(x_{3}+x\right)}{x_{4}-x_{3}-2 x} \\
&=\frac{f\left(x_{2}\right)-f\left(x_{1}\right)}{x_{2}-x_{1}} \\
&>g(0) < \frac{f\left(x_{4}\right)-f\left(x_{3}\right)}{x_{4}-x_{3}}
\end{aligned}
\end{split}
\end{align}

So, The Theorem~\ref{theorem:Truncated lap mechanism3} is proved. 
\end{proof}
\section{Conclusion}
\label{sec:conclusion}
We have introduced the truncated Laplace and Gaussian mechanisms. For a given truncation interval $[a, b]$, the truncated Gaussian mechanism ensures the same Renyi Differential Privacy (RDP) as the untruncated mechanism, regardless of the values chosen for the truncation interval $[a, b]$. Similarly, the truncated Laplace mechanism, for specified interval $[a, b]$, maintains the same RDP as the untruncated mechanism. We provide the RDP expressions for each of them. In the future, we will further explore the applications of these two mechanisms in specific scenarios.


\begin{thebibliography}{00}


\ifx \showCODEN    \undefined \def \showCODEN     #1{\unskip}     \fi
\ifx \showDOI      \undefined \def \showDOI       #1{#1}\fi
\ifx \showISBNx    \undefined \def \showISBNx     #1{\unskip}     \fi
\ifx \showISBNxiii \undefined \def \showISBNxiii  #1{\unskip}     \fi
\ifx \showISSN     \undefined \def \showISSN      #1{\unskip}     \fi
\ifx \showLCCN     \undefined \def \showLCCN      #1{\unskip}     \fi
\ifx \shownote     \undefined \def \shownote      #1{#1}          \fi
\ifx \showarticletitle \undefined \def \showarticletitle #1{#1}   \fi
\ifx \showURL      \undefined \def \showURL       {\relax}        \fi
\providecommand\bibfield[2]{#2}
\providecommand\bibinfo[2]{#2}
\providecommand\natexlab[1]{#1}
\providecommand\showeprint[2][]{arXiv:#2}

\bibitem[\protect\citeauthoryear{Balle, Barthe, Gaboardi, Hsu, and Sato}{Balle
  et~al\mbox{.}}{2020}]%
        {balle2020hypothesis}
\bibfield{author}{\bibinfo{person}{Borja Balle}, \bibinfo{person}{Gilles
  Barthe}, \bibinfo{person}{Marco Gaboardi}, \bibinfo{person}{Justin Hsu},
  {and} \bibinfo{person}{Tetsuya Sato}.} \bibinfo{year}{2020}\natexlab{}.
\newblock \showarticletitle{Hypothesis testing interpretations and renyi
  differential privacy}. In \bibinfo{booktitle}{{\em International Conference
  on Artificial Intelligence and Statistics}}. PMLR,
  \bibinfo{pages}{2496--2506}.
\newblock


\bibitem[\protect\citeauthoryear{Dwork}{Dwork}{2011}]%
        {CynthiaDwork2011AFF}
\bibfield{author}{\bibinfo{person}{Cynthia Dwork}.}
  \bibinfo{year}{2011}\natexlab{}.
\newblock \showarticletitle{A firm foundation for private data analysis}.
\newblock \bibinfo{journal}{{\em Communications of The ACM\/}}
  (\bibinfo{year}{2011}).
\newblock


\bibitem[\protect\citeauthoryear{Dwork, McSherry, Nissim, and Smith}{Dwork
  et~al\mbox{.}}{2006a}]%
        {dwork2006calibrating}
\bibfield{author}{\bibinfo{person}{Cynthia Dwork}, \bibinfo{person}{Frank
  McSherry}, \bibinfo{person}{Kobbi Nissim}, {and} \bibinfo{person}{Adam
  Smith}.} \bibinfo{year}{2006}\natexlab{a}.
\newblock \showarticletitle{Calibrating noise to sensitivity in private data
  analysis}. In \bibinfo{booktitle}{{\em TCC}}. \bibinfo{pages}{265--284}.
\newblock


\bibitem[\protect\citeauthoryear{Dwork, McSherry, Nissim, and Smith}{Dwork
  et~al\mbox{.}}{2006b}]%
        {CynthiaDwork2006CalibratingNT}
\bibfield{author}{\bibinfo{person}{Cynthia Dwork}, \bibinfo{person}{Frank
  McSherry}, \bibinfo{person}{Kobbi Nissim}, {and} \bibinfo{person}{Adam
  Smith}.} \bibinfo{year}{2006}\natexlab{b}.
\newblock \showarticletitle{Calibrating noise to sensitivity in private data
  analysis}.
\newblock \bibinfo{journal}{{\em Lecture Notes in Computer Science\/}}
  (\bibinfo{year}{2006}).
\newblock


\bibitem[\protect\citeauthoryear{Dwork, Roth, et~al\mbox{.}}{Dwork
  et~al\mbox{.}}{2014}]%
        {dwork2014algorithmic}
\bibfield{author}{\bibinfo{person}{Cynthia Dwork}, \bibinfo{person}{Aaron
  Roth}, {et~al\mbox{.}}} \bibinfo{year}{2014}\natexlab{}.
\newblock \showarticletitle{The algorithmic foundations of differential
  privacy}.
\newblock \bibinfo{journal}{{\em Foundations and Trends{\textregistered} in
  Theoretical Computer Science\/}} \bibinfo{volume}{9}, \bibinfo{number}{3--4}
  (\bibinfo{year}{2014}), \bibinfo{pages}{211--407}.
\newblock


\bibitem[\protect\citeauthoryear{Geng, Ding, Guo, and Kumar}{Geng
  et~al\mbox{.}}{2018}]%
        {Geng_Ding_Guo_Kumar_2018}
\bibfield{author}{\bibinfo{person}{Quan Geng}, \bibinfo{person}{Wei Ding},
  \bibinfo{person}{Ruiqi Guo}, {and} \bibinfo{person}{Sanjiv Kumar}.}
  \bibinfo{year}{2018}\natexlab{}.
\newblock \showarticletitle{Truncated Laplacian Mechanism for Approximate
  Differential Privacy.}
\newblock \bibinfo{journal}{{\em arXiv: Cryptography and Security,arXiv:
  Cryptography and Security\/}} (\bibinfo{date}{Oct} \bibinfo{year}{2018}).
\newblock


\bibitem[\protect\citeauthoryear{Holohan, Antonatos, Braghin, and
  Aonghusa}{Holohan et~al\mbox{.}}{2018}]%
        {Holohan_Antonatos_Braghin_Aonghusa_2018}
\bibfield{author}{\bibinfo{person}{Naoise Holohan}, \bibinfo{person}{Spiros
  Antonatos}, \bibinfo{person}{Stefano Braghin}, {and} \bibinfo{person}{PolMac
  Aonghusa}.} \bibinfo{year}{2018}\natexlab{}.
\newblock \showarticletitle{The Bounded Laplace Mechanism in Differential
  Privacy.}
\newblock \bibinfo{journal}{{\em arXiv: Cryptography and Security,arXiv:
  Cryptography and Security\/}} (\bibinfo{date}{Aug} \bibinfo{year}{2018}).
\newblock


\bibitem[\protect\citeauthoryear{Mironov}{Mironov}{2017}]%
        {IlyaMironov2017RnyiDP}
\bibfield{author}{\bibinfo{person}{Ilya Mironov}.}
  \bibinfo{year}{2017}\natexlab{}.
\newblock \showarticletitle{R{\'e}nyi Differential Privacy}.
\newblock \bibinfo{journal}{{\em IEEE Computer Security Foundations
  Symposium\/}} (\bibinfo{year}{2017}).
\newblock


\bibitem[\protect\citeauthoryear{Papernot, Song, Mironov, Raghunathan, Talwar,
  and Erlingsson}{Papernot et~al\mbox{.}}{2018}]%
        {papernot2018scalable}
\bibfield{author}{\bibinfo{person}{Nicolas Papernot}, \bibinfo{person}{Shuang
  Song}, \bibinfo{person}{Ilya Mironov}, \bibinfo{person}{Ananth Raghunathan},
  \bibinfo{person}{Kunal Talwar}, {and} \bibinfo{person}{{\'U}lfar
  Erlingsson}.} \bibinfo{year}{2018}\natexlab{}.
\newblock \showarticletitle{Scalable private learning with pate}.
\newblock \bibinfo{journal}{{\em arXiv preprint arXiv:1802.08908\/}}
  (\bibinfo{year}{2018}).
\newblock


\bibitem[\protect\citeauthoryear{Tian and Sun}{Tian and Sun}{2014}]%
        {tian2014new}
\bibfield{author}{\bibinfo{person}{Jingfeng Tian} {and} \bibinfo{person}{Yufeng
  Sun}.} \bibinfo{year}{2014}\natexlab{}.
\newblock \showarticletitle{New refinements of generalized Acz{\'e}l
  inequality}.
\newblock \bibinfo{journal}{{\em Journal of Inequalities and Applications\/}}
  \bibinfo{volume}{2014}, \bibinfo{number}{1} (\bibinfo{year}{2014}),
  \bibinfo{pages}{1--14}.
\newblock


\bibitem[\protect\citeauthoryear{Van~Erven and Harremos}{Van~Erven and
  Harremos}{2014}]%
        {van2014renyi}
\bibfield{author}{\bibinfo{person}{Tim Van~Erven} {and} \bibinfo{person}{Peter
  Harremos}.} \bibinfo{year}{2014}\natexlab{}.
\newblock \showarticletitle{Rényi divergence and Kullback-Leibler divergence}.
\newblock \bibinfo{journal}{{\em IEEE Transactions on Information Theory\/}}
  \bibinfo{volume}{60}, \bibinfo{number}{7} (\bibinfo{year}{2014}),
  \bibinfo{pages}{3797--3820}.
\newblock


\bibitem[\protect\citeauthoryear{Zhu, Li, Zhou, and Philip}{Zhu
  et~al\mbox{.}}{2017}]%
        {zhu2017differential}
\bibfield{author}{\bibinfo{person}{Tianqing Zhu}, \bibinfo{person}{Gang Li},
  \bibinfo{person}{Wanlei Zhou}, {and} \bibinfo{person}{S~Yu Philip}.}
  \bibinfo{year}{2017}\natexlab{}.
\newblock \bibinfo{booktitle}{{\em Differential privacy and applications}}.
\newblock \bibinfo{publisher}{Springer}.
\newblock


\end{thebibliography}

\appendix

\end{document}